\newcommand{\nats}{{\mathbb N}}
\DeclareMathOperator{\Fac}{Fact}
\DeclareMathOperator{\alp}{\textrm{Alph}}
\DeclareMathOperator{\MF}{\textrm{MF}}
\renewcommand{\epsilon}{\varepsilon}
\newtheorem{theorem}{Theorem}
\newtheorem{proposition}[theorem]{Proposition}
\newtheorem{lemma}[theorem]{Lemma}
\newtheorem{corollary}[theorem]{Corollary}
\newtheorem{example}{Example}
\newtheorem{remark}{Remark}
\begin{document}

\sloppy


\begin{frontmatter}

\title{Cyclic Complexity of Words\tnoteref{note1}}
\tnotetext[note1]{Some of the results in this paper were presented at the 39th International Symposium on Mathematical Foundations of Computer Science, MFCS~2014 \cite{mfcs}.}

\author[label1]{Julien Cassaigne}
\ead{julien.cassaigne@math.cnrs.fr}

\author[label2]{Gabriele Fici\corref{cor1}}
\ead{gabriele.fici@unipa.it}

\author[label2]{Marinella Sciortino}
\ead{marinella.sciortino@unipa.it}

\author[label4]{Luca Q. Zamboni}
\ead{zamboni@math.univ-lyon1.fr}

\address[label1]{Institut de Math\'ematiques de Marseille, CNRS UMR 7373,
Site Sud, Campus de Luminy\\ F-13288 Marseille cedex 9, France}
\address[label2]{Dipartimento di Matematica e Informatica, Universit\`a di Palermo\\Via Archirafi 34, 90123 Palermo, Italy}
\address[label4]{Universit\'e Claude Bernard Lyon 1, CNRS UMR 5208, Institut Camille Jordan, 43 blvd. du 11 novembre 1918\\ F-69622 Villeurbanne cedex, France}

\cortext[cor1]{Corresponding author.}

\journal{Journal of Combinatorial Theory, Series A}

\begin{abstract}
We introduce and study a  complexity function on words $c_x(n),$  called \emph{cyclic complexity}, which counts the number of conjugacy classes of factors of length $n$ of an infinite word $x.$ We extend the well-known Morse-Hedlund theorem to the setting of cyclic complexity by showing that a word is ultimately periodic if and only if it has bounded cyclic complexity. Unlike most complexity functions, cyclic complexity distinguishes between Sturmian words of different slopes. We prove that if $x$ is a Sturmian word and $y$ is a word having the same cyclic complexity of $x,$ then up to renaming letters, $x$ and $y$ have the same set of factors. In particular, $y$ is also Sturmian  of slope equal to that of $x.$  Since $c_x(n)=1$ for some $n\geq 1$ implies $x$ is periodic,  it is natural to consider the quantity $\liminf_{n\rightarrow \infty} c_x(n).$   We show that if $x$ is a Sturmian word, then $\liminf_{n\rightarrow \infty} c_x(n)=2.$  We prove however that this is not a characterization of Sturmian words by exhibiting a restricted class of Toeplitz words, including the period-doubling word, which also verify this same condition on the limit infimum. In contrast we show that, for the Thue-Morse word $t$, $\liminf_{n\rightarrow \infty} c_t(n)=+\infty.$ 
\end{abstract}

\begin{keyword}
Cyclic complexity, factor complexity, Sturmian words.
\end{keyword}

\end{frontmatter}

\section{Introduction}

The factor complexity $p_x(n)$  of an infinite word $x=x_0x_1x_2\cdots \in A^\nats$ (with each $x_i$ belonging to a finite nonempty alphabet $A)$ counts the number of distinct factors $x_ix_{i+1}\cdots x_{i+n-1}$ of length $n$ occurring in $x.$ It provides a measure of the extent of randomness of the word $x$ and more generally of the subshift generated by $x.$ Periodic words have bounded factor complexity while digit expansions of normal numbers have full complexity. A celebrated result of Hedlund and Morse in \cite{MoHe38}  states that every non-periodic word  contains at least $n+1$ distinct factors of each length $n.$  Moreover, there exist words satisfying  $p_x(n)=n+1$ for each $n\geq 1.$ These words are called Sturmian words, and in terms of their factor complexity,  are regarded to be the simplest non-periodic words.

Sturmian words admit many different characterizations
of combinatorial, geometric and arithmetic nature. In the 1940's,  Hedlund and Morse showed that each Sturmian word is the symbolic  coding of the orbit of a point $x$ on the unit circle under a rotation by an irrational angle $\theta ,$ called the slope, where the circle is  partitioned into two complementary intervals, one of length $\theta $ and the
other of length $1-\theta .$ Conversely, each such coding defines a Sturmian word. It is well known that the dynamical/ergodic properties of the system, as well as the combinatorial properties of the associated Sturmian word, hinge on the arithmetical/Diophantine qualities of the slope $\theta$ given by its continued fraction expansion.
Sturmian words arise naturally in various branches of mathematics  including combinatorics, algebra,
number theory, ergodic theory, dynamical systems and differential equations.
They also have implications in theoretical physics as $1$-dimensional models of quasi-crystals. 

Other measures of complexity of words have been introduced and studied in the literature, including abelian complexity, maximal pattern complexity, $k$-abelian complexity, binomial complexity,  periodicity complexity, minimal forbidden factor complexity and palindromic complexity. With respect to most word complexity functions, Sturmian words are characterized as those non-periodic  words of lowest complexity.
One exception to this occurs in the context of maximal pattern complexity introduced by Kamae in \cite{KaZa}. In this case, while  Sturmian words are {\it pattern Sturmian}, meaning that they have minimal maximal pattern complexity amongst all non-periodic words, they are not the only ones. In fact, a certain restricted class of Toeplitz words which includes the period-doubling word are also known to be pattern Sturmian (see \cite{KaZa01}). On the other hand, the Thue-Morse word, while  closely connected to the period-doubling word, is known to have full maximal pattern complexity (see Example 1 in \cite{KaZa}).

In this paper we consider a new measure of complexity, \emph{cyclic complexity}, which consists in counting the factors of each given length of an infinite word  up to conjugacy. Two words $u$ and $v$ are said to be {\it conjugate}  if and only if  $u=w_{1}w_{2}$ and $v=w_{2}w_{1}$ for some words $w_{1},w_{2},$ i.e.,  if they are equal when read on a circle. The cyclic complexity of a word is the function which counts the number of conjugacy classes of factors of each given length. We note that factor complexity, abelian complexity and cyclic complexity can all be viewed as actions of different subgroups of the symmetric group on the indices of a finite word (respectively, the trivial subgroup, the whole symmetric group and the cyclic subgroup generated by the permutation $(1,2,\ldots ,n)).$

We establish the following analogue of the  Morse-Hedlund theorem:

\begin{theorem}\label{theor1}
A word $x$ is ultimately periodic if and only if it has bounded cyclic complexity.
\end{theorem}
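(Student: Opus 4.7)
If $\omega$ is ultimately periodic with preperiod of length $k$ and period $p$, then every factor of length $n$ is determined by its starting index (one of at most $k+p$ inequivalent positions), so $p_\omega(n)\leq k+p$ for every $n$. Since $c_\omega(n)\leq p_\omega(n)$, the cyclic complexity is bounded.

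\textbf{Reverse direction (contrapositive).} I would assume $\omega$ is not ultimately periodic and aim to show that $c_\omega(n)$ is unbounded. The starting point is the easy inequality $p_\omega(n)\leq n\cdot c_\omega(n)$, obtained by observing that each conjugacy class of words of length $n$ has at most $n$ elements. Combined with the classical Morse--Hedlund theorem ($p_\omega(n)\geq n+1$), this only yields $c_\omega(n)\geq 2$, so a sharper tool is needed.

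The key idea I would pursue is that two conjugate length-$n$ factors occurring at nearby positions in $\omega$ impose local periodicity constraints on $\omega$ itself. Concretely, if $c_\omega(n)\leq C$, then by pigeonhole among any $C+1$ consecutive starting positions there exist $i<j$ with $j-i\leq C$ such that $\omega[i..i+n-1]$ and $\omega[j..j+n-1]$ are conjugate. Writing both as cyclic shifts of a common word and matching the shift amount with the positional offset $j-i$, one extracts period relations on the interval $[i,j+n-1]$. Carrying out this construction at arbitrarily late positions and then combining the resulting local periods via Fine and Wilf's periodicity lemma should pin down a single eventual period for $\omega$, contradicting the non-ultimate-periodicity hypothesis.

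\textbf{Main obstacle.} The delicate step is converting the abstract conjugacy relation into explicit period information on $\omega$: conjugacy permits a cyclic shift of arbitrary magnitude, which need not equal the positional offset $j-i$, so one must case-analyze the shift and record the compatibility constraints it imposes on the overlapping letters. Merging many such local constraints coherently — so that Fine and Wilf can be invoked to produce one global period — is, in my view, the technical heart of the proof.
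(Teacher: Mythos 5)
Your forward direction is fine and matches the paper's (bounded factor complexity for ultimately periodic words, plus $c_\omega(n)\le p_\omega(n)$). The reverse direction, however, contains a genuine gap that you yourself flag as ``the technical heart'' but do not resolve, and it is not a routine technicality. The problem is that conjugacy of two nearby factors carries essentially no positional information: if $u=\omega[i..i+n-1]$ and $w=\omega[j..j+n-1]$ are conjugate, you only know $u=pq$, $w=qp$ for \emph{some} split, and nothing forces the rotation amount to relate to the offset $j-i$. In fact the conclusion of your pigeonhole step can hold for trivial reasons while giving no periodicity at all: any two words of length $n$ containing a single occurrence of the letter $1$ are conjugate, so for instance in a Sturmian word of small slope huge blocks of consecutive positions carry pairwise conjugate factors, yet the word is aperiodic. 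So ``extract period relations on $[i,j+n-1]$'' does not follow, and without explicit, coherent local periods of controlled size Fine and Wilf cannot be invoked. As written, the reverse direction is a plan rather than a proof.

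The paper avoids this difficulty entirely by arguing in the other direction: rather than deducing periodicity from few classes, it exhibits many pairwise non-conjugate factors in any aperiodic word. The chain is: bounded cyclic complexity forces bounded abelian complexity, hence (by a result of Richomme, Saari and Zamboni) $C$-balancedness; a $C$-balanced word omits some factor, from which one extracts a letter $a$ and a word $v$ occurring infinitely often in a suffix $\omega'$ with $av$ forbidden in $\omega'$; aperiodicity then yields arbitrarily many distinct factors of a common length beginning with $v$, which can be extended (using balancedness again) to end in $a$. Two distinct such factors cannot be conjugate, since reading one cyclically an occurrence of $a$ is followed by $v$, so a distinct conjugate occurring in $\omega'$ would exhibit the forbidden word $av$. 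This produces $\ge M$ conjugacy classes for every $M$, with no periodicity lemma needed. If you want to salvage your route you would need a substitute for this mechanism; the balancedness reduction is the missing idea.
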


The factor complexity does not distinguish between  Sturmian words of different slopes. In contrast, for cyclic complexity the situation is quite different. Indeed, we prove:

\begin{theorem}\label{theor2}
 Let $x$ be a Sturmian word. If  $y$ is an infinite word whose cyclic complexity is equal to that of  $x,$ then up to renaming letters,  $x$ and $y$ have the same set of factors. In particular, $y$ is also Sturmian.
\end{theorem}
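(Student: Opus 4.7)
The plan is to show that $y$ must be Sturmian of the same slope as $x$; the conclusion on factor sets then follows at once, since two Sturmian words sharing a slope have the same set of factors (a classical fact). I proceed through a short sequence of reductions, isolating the one nontrivial step.

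First I handle two easy reductions. Since every Sturmian word is aperiodic, Theorem~\ref{theor1} implies that $c_x$ is unbounded; as $c_y = c_x$, applying Theorem~\ref{theor1} in the contrapositive to $y$ gives that $y$ is not ultimately periodic either. Next, $c_x(1)=2$ because a Sturmian word uses exactly two distinct letters, so $c_y(1)=2$ forces $y$ to use exactly two letters, and after relabeling I may assume that $x$ and $y$ are both words over the alphabet $\{0,1\}$.

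The main step is to prove that $y$ is Sturmian. Being binary and aperiodic, it suffices to show that $y$ is balanced, meaning any two factors of $y$ of the same length have Parikh vectors differing coordinatewise by at most one. I would argue by contradiction: if $y$ is not balanced, then at some minimal length $n_0$ there are factors of $y$ in at least three distinct abelian classes, witnessed concretely by two factors of the form $0u0$ and $1u1$ of the same length. Since conjugate words are abelian equivalent, having three abelian classes at length $n_0$ already implies $c_y(n_0)\geq 3$. The crux is then to \emph{propagate} this obstruction: a failure of balance at one length forces at least three abelian classes at every sufficiently large length $n$, hence $c_y(n)\geq 3$ eventually. This will contradict $c_y = c_x$, provided one shows that $c_x(n)=2$ for infinitely many $n$ when $x$ is Sturmian; the latter I would establish by exhibiting, for each denominator $q_k$ of a principal convergent of the slope $\alpha$, a length near $q_k$ at which all $n+1$ factors of $x$ fall into exactly two conjugacy classes, namely the two Christoffel-type conjugacy classes associated with $q_k$.

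With $y$ now known to be Sturmian, the final step is to identify its slope with that of $x$. I would do this by proving that the sequence $(c_x(n))_n$ determines the slope of a Sturmian word: the positions and values of the local minima of $c_x(\cdot)$, which are controlled by the continued fraction expansion of $\alpha$, suffice to recover $\alpha$. Hence $c_y = c_x$ forces $y$ to have the same slope as $x$, and therefore the same set of factors (up to the renaming of letters already performed). The principal obstacle is the propagation step in the main argument — transferring a single failure of balance at length $n_0$ into a persistent excess of abelian classes at all sufficiently large lengths. This will require a careful analysis of how Parikh vectors of factors of length $n+1$ relate to those of length $n$ in a binary word, together with the observation that a binary word with bounded but nonzero balance defect still admits factors in three abelian classes at every length above a computable threshold.
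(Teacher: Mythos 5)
Your strategy (prove that $y$ is balanced, hence Sturmian, and then recover the slope from the sequence $c_x(\cdot)$) is genuinely different from the paper's, which never goes through balance: the paper compares $\Fac(x)$ and $\Fac(y)$ level by level, locates the first length at which they diverge via a common bispecial factor, and derives a contradiction from a Christoffel-array computation showing $c_x(2p-q)\geq 4$ while $c_y(2p-q)=3$. Unfortunately your route has a fatal gap: the ``propagation'' lemma you rely on --- that a binary aperiodic word which fails to be balanced must have at least three abelian classes (hence $c_y(n)\geq 3$) at every sufficiently large length $n$ --- is false. The Thue--Morse word is binary, aperiodic and unbalanced (it contains both $0100$ and $1101$), yet its abelian complexity equals $2$ at every odd length; so a single balance violation does not force three abelian classes at all large lengths. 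Worse, what you actually need is $c_y(n)\geq 3$ at one of the \emph{specific} lengths where $c_x(n)=2$, and by Lemma \ref{lem:bis+2} these form a sparse set (namely $n=1$ and the lengths $n$ such that $x$ has a bispecial factor of length $n-2$). Since neither abelian nor cyclic complexity is monotone in $n$, an excess of classes at one length $n_0$ gives no control at those lengths, and the intended contradiction with $c_y=c_x$ does not materialize.

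The final step is also only sketched, though it is repairable: granting that $y$ is Sturmian, Lemma \ref{lem:bis+2} lets you read off the set of bispecial lengths of $x$ from $\{n : c_x(n)=2\}$, and the recursion on period pairs $(p,q)\mapsto(p+q,p)$ or $(p+q,q)$ shows that this set determines the continued fraction of the slope up to the exchange $\alpha\leftrightarrow 1-\alpha$, i.e., up to renaming letters; this is essentially the bookkeeping the paper also performs. But the main step cannot be completed as proposed: you would have to replace the balance/propagation argument by something that, like the paper's proof, tracks where the two languages first diverge and exploits the structure of conjugacy classes (not merely Parikh vectors) at the critical length $|w_y|+2=2p-q$.
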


A word is (purely) periodic if and only if there exists an integer $n$ such that all  factors of length $n$ are conjugate. Therefore, the minimum value of the cyclic complexity of a non-periodic word  is 2. We prove that if $x$ is a  Sturmian word then $\liminf_{n\rightarrow \infty}c_x(n)=2.$ We show however that this is not a characterization of Sturmian words by exhibiting a family of Toeplitz  words, which includes the period-doubling word, for which $\liminf_{n\rightarrow \infty}c_x(n)=2.$  We further show that if $x$ is a paperfolding word, then for every $n\geq 1$ one has $c_x(4\cdot 2^n)=4$. In contrast, we prove that for the Thue-Morse word, $\liminf_{n\rightarrow \infty}c_x(n)=+\infty.$

\section{Basics}\label{sec:def}

Given a finite nonempty ordered set $A$ (called the {\it alphabet}), we let $A^*$ and $A^\nats$ denote  respectively the set of finite words and the set of (right) infinite words over the alphabet $A$. The order on the alphabet $A$ can be extended to the usual lexicographic order on the set $A^{*}$.

For a finite word $w =w_1w_2\cdots w_n$ with $n \geq 1$ and $w_i \in A,$ the length $n$ of $w$ is denoted by $|w|.$ The  \textit{empty word} is denoted by $\varepsilon$ and we set $|\varepsilon|=0.$ We let $A^{n}$ denote the set of words of length $n$ and $ A^+$ the set of nonempty words. For $u,v \in A^+$,  $|u|_v$ is the number of occurrences of $v$ in $u.$
The \emph{Parikh vector} of $w$ is the vector whose components are the number of occurrences of the letters of $A$ in $w$. For example, if $A=\{a,b,c\}$, then the Parikh vector of $w=abb$ is $(1,2,0)$.
The \emph{reverse} (or \emph{mirror image}) of a finite word $w$ is the word $\tilde{w}$ obtained by reading $w$ in the reverse order.

Given a finite or  infinite word $\omega =\omega_1\omega_2\cdots $ with $\omega_i\in A,$ we say  that a word $u\in A^+$ is a  {\it factor} of $\omega$ if  $u=\omega_{i}\omega_{i+1}\cdots \omega_{i+|u|-1}$ for some $i\in \nats.$
We let $\Fac(\omega)$ denote the set of all factors of $\omega,$
and $\alp(\omega)$ the set of all factors of $\omega$ of length $1.$ If $\omega=u\nu$, we say that $u$ is a \emph{prefix} of $\omega$, while $\nu$ is a \emph{suffix} of $\omega$.
A factor $u$ of $\omega$ is called {\it right special} (resp.~{\it left special}) if both $ua$ and $ub$ (resp.~$au$ and $bu$) are factors of $\omega$ for distinct letters $a,b \in A.$ The factor $u$ is called \emph{bispecial} if it is both right special and left special. Furthermore, a bispecial factor $u$ of $\omega$ is \emph{strongly bispecial} if $aub\in \Fac(\omega)$ for every possible choice of $a$ and $b$ in $A$.

For each factor $u$ of $\omega$, we  set
\[\omega\big|_{u}=\{ n\in \nats \mid \omega_n\omega_{n+1}\cdots \omega_{n+|u|-1}=u\}.\]
We say that $\omega$ is {\it recurrent} if for every $u\in \Fac(\omega)$  the set $\omega\big|_u$ is infinite.
We say that $\omega$ is {\it uniformly recurrent} if for every $u\in \Fac(\omega)$  the set $\omega\big|_u$ is syndetic, i.e., of bounded gap. A word $\omega\in A^\nats$  is {\it (purely) periodic} if there exists a positive integer $p$ such that $\omega_{i+p} = \omega_i$ for all indices $i$, while it is {\it ultimately periodic} if $\omega_{i+p} = \omega_i$ for all sufficiently large $i$.  Finally, a word $\omega\in A^\nats$ is called {\it aperiodic} if it is not ultimately periodic.
For a finite word $w=w_1w_2\cdots w_n$, we call $p$ a \emph{period} of $w$ if $w_{i+p}=w_{i}$ for every $1 \le i \le n-p$. Two finite or infinite words are said to be {\it isomorphic} if the two words are equal up to a renaming of the  letters.

A (finite or infinite) word $\omega$ over $A$ is {\it balanced} if and only if for any $u,v$ factors of $\omega$ of the same length and for every letter $a\in A$, one has $||u|_{a}-|v|_{a}|\le 1$.
More generally, $\omega$ is {\it $C$-balanced} if there exists a constant $C>0$ such that  for any $u,v$ factors of $\omega$ of the same length and for every letter $a\in A$, one has $||u|_{a}-|v|_{a}|\le C$.

The \emph{factor complexity} of an infinite word $\omega$ is the function $$p_{\omega}(n)=|\Fac(\omega)\cap A^{n}|,$$ i.e., the function that counts the number of distinct factors of length $n$ of $\omega$, for every $n\geq 0$ (cf. \cite{MoHe38}).
By the Morse-Hedlund theorem, a word $\omega$ is aperiodic if and only if $p_{\omega}(n)\geq n+1$ for each $n.$ Words with $p_{\omega}(n)= n+1$ for each $n\in \nats$ are called Sturmian words.

The factor complexity counts the factors appearing in the word. A dual point of view consists in counting the shortest factors that \emph{do not} appear in the word. This leads to another measure of complexity called the {\it minimal forbidden factor complexity}.
Let $\omega$ be a (finite or infinite) word over an alphabet $A$.
A finite nonempty word $v$ is a \textit{minimal forbidden factor} for $\omega$ if $v$ does not belong to $\Fac(\omega)$ but every proper factor of $v$ does.
We let $\MF(\omega)$ denote the set of all minimal forbidden words for $\omega$.
The minimal forbidden factor complexity of an infinite word $\omega$ is the function $$mf_{\omega}(n)=\left | \MF(\omega) \cap A^{n} \right |,$$ i.e., the function that counts the number of distinct minimal forbidden factors of length $n$ of $\omega$, for every $n\geq 0$ (cf. \cite{MiReSci02}).

Another approach in measuring the complexity of a words consists in counting its factors up to an equivalence relation. The abelian complexity can be framed in this context. Two finite words $u,v$ are \emph{abelian equivalent} (denoted $u \approx v$) if they have the same Parikh vector. Note that $\approx$ is an equivalence relation over $A^{*}$.
More formally, the \emph{abelian complexity} of a word $\omega$ is the function $$a_{\omega}(n)=\left | \frac{\Fac(\omega) \cap A^{n}}{\approx} \right |,$$ i.e., the function that counts the number of distinct Parikh vectors of factors of length $n$ of $\omega$, for every $n\geq 0$ (cf. \cite{CoHe73}).

We now introduce a new measure of complexity. Recall that two finite words $u,v$ are \emph{conjugate} if there exist  words $w_{1},w_{2}$ such that $u=w_{1}w_{2}$ and $v=w_{2}w_{1}$. The conjugacy relation is an equivalence over $A^{*}$, which is denoted by $\sim$, whose classes are called \emph{conjugacy classes}. 

The \emph{cyclic complexity} of an infinite word $\omega$ is the function $$c_{\omega}(n)=\left | \frac{\Fac(\omega) \cap A^{n}}{\sim} \right |,$$ i.e., the function that counts the number of distinct conjugacy classes of factors of length $n$ of $\omega$, for every $n\geq 0$.

\begin{remark}\label{rem:ineq}
\rm{For any infinite word $\omega$ it holds that  \[a_{\omega}(n)\leq c_{\omega}(n)\leq p_{\omega}(n)\] for every $n.$ Indeed, the second inequality is obvious, while the first follows from the fact that two factors that are conjugate must have the same Parikh vector.}
\end{remark}

Another basic property of the cyclic complexity is stated in the following proposition.

\begin{proposition}\label{prop:full}
 An infinite word has full cyclic complexity if and only if it has full factor complexity.
\end{proposition}

\begin{proof}
Clearly, full factor complexity implies full cyclic complexity. Conversely, if $\omega$ is an infinite word having full cyclic complexity, then  for every  $w\in A^{*},$ some conjugate of $ww$ is an element of $\Fac(\omega).$ But as every conjugate of $ww$ contains $w$ as a factor, we have $w\in \Fac(\omega)$.
\end{proof}

Cyclic complexity, as many other mentioned complexity functions, is naturally extended to any factorial language.
Recall that a language is any subset of $A^{*}$. A language $L$ is called \emph{factorial} if it contains all the factors of its words, i.e., if  $uv\in L \Rightarrow u,v\in L$. The cyclic complexity of $L$ is defined by $$c_{L}(n)=\left | \frac{L \cap A^{n}}{\sim} \right |.$$

The cyclic complexity is an invariant for several operations on languages.
For example, it is clear that if two languages are isomorphic (i.e., one can be obtained from the other by renaming letters) then they have the same cyclic complexity. Furthermore, if $L$ is a language and $\tilde{L}$ is obtained from $L$ by reversing (mirror image) each word in $L$, then $L$ and $\tilde{L}$ have the same cyclic complexity.

\section{Cyclic Complexity Distinguishes Between Periodic and Aperiodic Words}\label{sec:theor1}

In this section we give a proof of Theorem \ref{theor1}.
The following lemma connects cyclic complexity to balancedness.

\begin{lemma}\label{lem:C}
 Let $\omega\in A^{\nats}$ and suppose that there exists a constant $C$ such that $c_{\omega}(n)\leq C$ for every $n$. Then $\omega$ is $C$-balanced.
\end{lemma}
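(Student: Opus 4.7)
The plan is to prove the contrapositive: if $\omega$ fails to be $C$-balanced, then $c_\omega(n) > C$ for some $n$. Suppose then that there exist factors $u,v\in\Fac(\omega)$ with $|u|=|v|=n$ and a letter $a\in A$ such that $\bigl||u|_a-|v|_a\bigr|\ge C+1$. After possibly swapping $u$ and $v$, write $|u|_a=p$ and $|v|_a=q$ with $q\ge p+C+1$. Fix occurrences $i\in\omega|_u$ and $j\in\omega|_v$, and without loss of generality assume $i<j$.

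The key observation is a standard sliding-window argument. For each integer $k$ with $i\le k\le j$, the word $f_k:=\omega_k\omega_{k+1}\cdots\omega_{k+n-1}$ is a factor of length $n$ with $f_i=u$ and $f_j=v$. Passing from $f_k$ to $f_{k+1}$ removes one letter from the left and appends one letter on the right, so $|f_{k+1}|_a-|f_k|_a\in\{-1,0,+1\}$. Hence the integer sequence $|f_i|_a,|f_{i+1}|_a,\ldots,|f_j|_a$ starts at $p$, ends at $q$, and has unit increments, so every integer in $[p,q]$ is attained as the $a$-count of some length-$n$ factor of $\omega$. This produces at least $q-p+1\ge C+2$ factors of length $n$ with pairwise distinct $a$-counts, hence pairwise distinct Parikh vectors.

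Finally, two words with different Parikh vectors cannot be conjugate, since conjugate words have identical Parikh vectors (as already noted in the paper). Therefore these $C+2$ factors lie in pairwise distinct conjugacy classes, giving $c_\omega(n)\ge C+2>C$, contradicting the hypothesis $c_\omega(n)\le C$. We conclude that $\omega$ must be $C$-balanced.

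There is no real obstacle; the argument is essentially a one-line reduction to the well-known fact that bounded abelian complexity implies balancedness, via the inequality $a_\omega(n)\le c_\omega(n)$ noted just before the lemma. The only point requiring a bit of care is invoking the sliding-window interpolation between the two given occurrences, which is available here because $\omega\in A^{\nats}$ is infinite and the relevant factors all lie in a common interval of indices.
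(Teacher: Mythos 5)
Your proof is correct. The paper's own argument is exactly your ``one-line reduction'': conjugate words share a Parikh vector, so $a_\omega(n)\le c_\omega(n)\le C$, and then the paper simply cites Richomme--Saari--Zamboni for the implication that abelian complexity bounded by $C$ forces $C$-balancedness. You follow the same reduction but then prove that cited implication from scratch: the sliding-window observation that $|f_{k+1}|_a-|f_k|_a\in\{-1,0,+1\}$, combined with the discrete intermediate value theorem, turns a balance defect of size $C+1$ into $C+2$ factors of a single length with pairwise distinct $a$-counts, hence pairwise distinct Parikh vectors and conjugacy classes. The payoff of your version is self-containedness (and in fact a marginally sharper quantitative statement, since you only need $c_\omega(n)\le C+1$ to conclude $C$-balancedness); the cost is a little length where the paper gets away with a reference. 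One cosmetic point: the ``without loss of generality $i<j$'' is harmless but unnecessary --- the intermediate value argument is symmetric in the direction of traversal, so you could simply walk from whichever occurrence comes first to the other.
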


\begin{proof}
By Remark \ref{rem:ineq}, $a_{\omega}(n)\leq C$ for every $n$. It is proved in \cite{RiSaZa11} that this implies that the word $\omega$ is $C$-balanced.
\end{proof}

\begin{lemma}\label{aper} Let $\omega\in A^\nats$ be aperiodic and let $v\in A^+$ be a factor of $\omega$ which occurs in $\omega$ an infinite number of times. Then, for each positive integer $K$ there exists a positive integer $n$ such that  $\omega$ contains at least $K+1$ distinct factors of length $n$ beginning with $v.$
\end{lemma}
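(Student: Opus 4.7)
The plan is to argue by contradiction. Suppose some $K$ bounds, uniformly in $n$, the number of distinct factors of $\omega$ of length $n$ that begin with $v$. From this I will produce only finitely many right-infinite continuations of $v$ inside $\Fac(\omega)$, and then exploit the infinite number of occurrences of $v$ via a pigeonhole argument to locate two distinct positions of $\omega$ from which $\omega$ has identical infinite continuations, forcing ultimate periodicity.

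Concretely, I first introduce
\[
X = \{x \in A^\nats : \text{every prefix of } vx \text{ belongs to } \Fac(\omega)\},
\]
the set of right-infinite extensions of $v$ that stay inside the language of $\omega$. I claim $|X| \le K$. Given any $m$ distinct elements of $X$, they pairwise first disagree at some finite positions; taking $M$ larger than all of these pairwise discrepancy positions makes their length-$M$ prefixes pairwise distinct, and prepending $v$ yields $m$ distinct factors of $\omega$ of length $|v|+M$ beginning with $v$. Hence $m \le K$.

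Next, for each of the infinitely many positions $i$ at which $v$ occurs in $\omega$, the infinite suffix $\omega_i\omega_{i+1}\cdots$ has the form $v\cdot y_i$ with $y_i \in X$, since every prefix of $vy_i$ occurs in $\omega$ starting at position $i$. Because $X$ is finite, some fixed $x \in X$ satisfies $y_i = x$ for infinitely many such $i$; choose two such indices $i < i'$. Then $\omega$ coincides with $vx$ starting at both $i$ and $i'$, so $\omega_{i+t} = \omega_{i'+t}$ for every $t \ge 0$. This makes $\omega$ ultimately periodic with period $i'-i$, contradicting aperiodicity.

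The main obstacle is the finiteness of $X$: once it is in hand, the pigeonhole/periodicity step is immediate. That finiteness is essentially a K\"onig's-lemma style compactness statement packaged as a concrete counting argument, and it is the place where the hypothetical bound on factors of $\omega$ beginning with $v$ is genuinely used.
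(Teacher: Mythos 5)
Your proof is correct and rests on the same two facts as the paper's: that two equal infinite suffixes of $\omega$ starting at distinct positions force ultimate periodicity, and that distinct infinite words are separated by sufficiently long finite prefixes. The paper runs this directly --- it picks $K+1$ suffixes of $\omega$ beginning with $v$ (available since $v$ occurs infinitely often), notes they are pairwise distinct by aperiodicity, and truncates at a large enough length --- whereas you take the contrapositive and route the argument through the auxiliary set $X$ of infinite right-extensions, but the mathematical content is essentially identical.
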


\begin{proof} Let  $y_0,y_1,\ldots ,y_K$ be $K+1$ suffixes of $\omega$ beginning with $v.$ Since $\omega$ is aperiodic, the $y_i$ are pairwise distinct. Thus for all $n$ sufficiently large, the prefixes of $y_i$ of length $n$ are pairwise distinct.
\end{proof}

\noindent {\bf Theorem 1.}
{\it A word $\omega$ is ultimately periodic if and only if it has bounded cyclic complexity.
}
\medskip

\begin{proof}
If $\omega$ is ultimately periodic, then it has bounded factor complexity by the Morse-Hedlund theorem, and hence bounded cyclic complexity.

Let us now prove that if $\omega$ is aperiodic, then for any fixed positive integer $M$, $c_{\omega}(n)\geq M$ for some $n$. Short of replacing $\omega$ by a suffix of $\omega,$ we can suppose that each letter occurring in $\omega$ occurs infinitely often in $\omega.$ First, suppose that for each positive integer $C,$ $\omega$ is not $C$-balanced.  Then, by Lemma~\ref{lem:C}, the cyclic complexity of $\omega$ is unbounded and we are done. Thus, we can suppose that $\omega$ is $C$-balanced for some positive integer $C.$ 

Since $\omega$ is $C$-balanced and each $a\in \alp(\omega)$ occurs in $\omega$ an infinite number of times, it follows that there exists a positive integer $N$ such that each factor of $\omega$ of length $N$ contains an occurrence of each $a\in \alp(\omega).$
Fix $a\in \alp(\omega).$ Then $a^N$ is not a factor of $\omega.$ Let $a^k$ be the longest suffix of $a^N$ which occurs in $\omega$ an infinite number of times.  Clearly, $1\leq k < n$. So, there exists a suffix $\omega '$ of $\omega$ for which $a^{k+1}$ is a forbidden factor of $\omega'.$
By Lemma~\ref{aper}, there exists a positive integer $n_0$ such that $\omega'$ contains at least $MN$ distinct factors of length $n_0$ beginning with $a^k.$ We let  $u_1, u_2,\ldots ,u_{MN}$ denote these factors. There exist  $v_1,v_2,\ldots ,v_{MN}$, each in $A^N$, such that $u_iv_i$ are factors of $\omega'$  (of length $n_0+N)$ for each $1\leq i\leq MN.$ Since each $v_i$ contains at least one occurrence of $a,$ it follows that there exists $n>n_0$ such that $\omega'$ contains at least $M$ distinct factors of length $n$ beginning with $a^k$ and terminating in $a.$
Since $a^{k+1}$ is a forbidden factor of $\omega',$  no two of these factors are conjugate to one another. Hence, $c_{\omega'}(n)\geq M$ and thus $c_{\omega}(n)\geq M$.
\end{proof}

\section{Cyclic Complexity Distinguishes Between Sturmian Words with Different Languages}\label{sec:St}

In this section we investigate the cyclic complexity of Sturmian words and give a proof of Theorem \ref{theor2}.
We begin by reviewing some basic properties of Sturmian words which are relevant to our proof of Theorem \ref{theor2}. See also \cite[Chap. 2]{lothaire-book:2002}. Throughout this section we fix the alphabet $A=\{0,1\}.$ An infinite word $x\in A^\nats$ is called Sturmian if it satisfies any of the following equivalent conditions:

\begin{proposition}\label{prop:sturm}
 Let $x\in A^\nats$. The following conditions are equivalent:
\begin{enumerate}
 \item $x$ has exactly $n+1$ distinct factors of each length $n;$
 \item $x$ is balanced and aperiodic;
 \item $x$ has exactly one right (resp. left) special factor for each length.
\end{enumerate}
\end{proposition}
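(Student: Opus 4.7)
The plan is to establish the equivalence by proving $(1) \Leftrightarrow (3)$ directly, and then closing the loop with $(1) \Rightarrow (2) \Rightarrow (1)$. This splits the work into one straightforward combinatorial identity on the one hand, and two classical but more delicate arguments connecting balancedness to factor complexity on the other.

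For $(1) \Leftrightarrow (3)$, I would invoke the standard factor-complexity differential. Since $A = \{0,1\}$, every factor of length $n$ appearing in $x$ has either one or two right extensions in $x$, and those with two are by definition the right special factors of length $n$. Summing extensions yields the identity $p_x(n+1) - p_x(n) = r_x(n)$, where $r_x(n)$ counts right special factors of length $n$. Combined with $p_x(0) = 1$, a telescoping argument gives that $p_x(n) = n+1$ for all $n \geq 0$ if and only if $r_x(n) = 1$ for all $n \geq 0$. The symmetric statement for left special factors follows by applying the same reasoning to the reverse of $x$, which by the remarks of Section~\ref{sec:def} shares the cyclic/factorial structure of $x$.

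For $(1) \Rightarrow (2)$, aperiodicity is immediate from the Morse--Hedlund theorem, since $p_x(n) = n+1 > n$ precludes ultimate periodicity. Balancedness is more delicate: I would argue by contradiction. If $x$ were not balanced, one could pick factors $u, v$ of common length $n$ with $||u|_0 - |v|_0| \geq 2$ and $n$ minimal. A short stripping analysis, comparing Parikh vectors after dropping first or last letters, forces $u$ and $v$ to have different first letters and different last letters, and iterating a symmetrisation of this observation one extracts a palindrome $w$ such that both $0w0$ and $1w1$ occur as factors of $x$. From here, a counting argument on factors of length $|w|+1$ containing $w$ as a suffix or prefix produces two distinct right special factors of some length (or an excess of factors of length $|w|+2$), contradicting $(3)$ and hence $(1)$.

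For $(2) \Rightarrow (1)$, Morse--Hedlund applied to an aperiodic word gives $p_x(n) \geq n+1$. For the matching upper bound, balancedness tightly constrains the abelian complexity, and combined with aperiodicity forces $x$ (up to letter renaming) to be a mechanical word, i.e., the symbolic coding of the orbit of a point on the unit circle under an irrational rotation, by the classical Hedlund--Morse correspondence; such codings are readily verified to have factor complexity exactly $n+1$. I expect the main obstacle throughout to be the extraction of the palindrome $w$ in the $(1) \Rightarrow (2)$ step: it requires careful combinatorial bookkeeping of Parikh vectors across a minimal unbalanced pair, and is the classical technical lemma underlying this equivalence.
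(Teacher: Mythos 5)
First, a point of reference: the paper does not prove Proposition~\ref{prop:sturm}; it records it as the classical characterization of Sturmian words and points to \cite{lothaire-book:2002}, so your proposal can only be measured against the standard textbook arguments. Your overall architecture is exactly that classical route, and the right-special half of $(1)\Leftrightarrow(3)$ is correct: every factor of a right-infinite binary word has at least one right extension, so $p_x(n+1)-p_x(n)$ equals the number of right special factors of length $n$, and telescoping from $p_x(0)=1$ gives the equivalence. The reduction of $(2)\Rightarrow(1)$ to the Morse--Hedlund classification of balanced aperiodic words as rotation codings is legitimate as a citation, though it delegates the hardest part of the whole equivalence to that theorem.

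Two steps, however, have genuine gaps. (i) The left-special half of $(3)$ cannot be dispatched by ``applying the same reasoning to the reverse of $x$'': $x$ is a one-sided infinite word and has no reverse, and the analogous identity $p_x(n+1)-p_x(n)=\ell_x(n)$ is false in general, because a factor occurring only as a prefix of $x$ has no left extension. Concretely, $10^\omega$ has exactly one left special factor (namely $0^n$) of each length $n$, yet $p(n)=2$ and the word is ultimately periodic; so the left version of $(3)$ does not yield $(1)$ by your symmetry argument. The correct treatment either goes through closure of $\Fac(x)$ under reversal (which must itself be established, e.g.\ from the balance property) or accounts explicitly for the at most one non-left-extendable factor of each length. (ii) In $(1)\Rightarrow(2)$ the concluding count does not close. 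The Coven--Hedlund lemma gives a palindrome $w$ with $0w0,1w1\in\Fac(x)$, hence $w$ is bispecial; but this is perfectly compatible with $x$ having exactly one right special factor of each of the lengths $|w|$ and $|w|+1$ and exactly $|w|+3$ factors of length $|w|+2$ (three of them of the form $awb$). So neither ``two distinct right special factors of some length'' nor ``an excess of factors of length $|w|+2$'' follows from what you have written; the contradiction requires using the minimality of $w$ (equivalently, of the unbalanced pair) in a further, essential step, and that step is exactly what is missing from the sketch.
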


The best known example of a Sturmian word is the Fibonacci word $F = 010010100100101001\cdots$, obtained as the fixed point of the substitution $0 \mapsto 01$, $1 \mapsto 0$.
It is easy to see the set of factors of  a Sturmian word $x$ is {\em closed under reversal}, i.e., if $u$ is a factor of $x$, then so is its reversal $\tilde{u}$ (see, for instance, \cite[Chap. 2]{lothaire-book:2002}).
It follows that the right special factors of a Sturmian word are the reversals of its left special factors. In particular, the bispecial factors of a Sturmian word are palindromes.

\begin{remark}\label{why}
\rm{It follows from Proposition~\ref{prop:sturm} that if $x$ is a Sturmian word, then for each $n\geq 0$  there exist a unique factor $u$ of length $n$ such that both $u0$ and $u1$ belong to $\Fac(x)$  and a unique factor $v$ of length $n$ such that both $0v$ and $1v$ belong to  $\Fac(x).$ We consider two cases: Case 1:  $u\neq v,$ and Case 2:  $u=v.$ In Case 1 it follows that $u$ is a suffix of a unique factor $w$ of length $n+1$ and both $w0$ and $w1$ are factors of $x$ of length $n+2.$  Moreover, for each  factor $z\neq w$ of length $n+1,$ let $z'$ denote the suffix of $z$ of length $n.$ Then, as $z'$ is not right special, it follows that there exists a unique $a\in \{0,1\}$ such that  $x\big|_{z'}=x\big|_{z'a},$ that is, each occurrence of $z'$ in $x$ is an occurrence of $z'a.$ Hence $x\big|_z=x\big|_{za}.$
In other words, in Case 1 we have that  $\Fac(x)\cap \{0,1\}^{n+1}$ uniquely determines  $\Fac(x)\cap \{0,1\}^{n+2}.$
In Case 2, as $u=v$ we have that  $u$  is a bispecial factor of $x$ of length $n,$ and hence each of $u0,u1,0u,1u$ is a factor of $x$ of length $n+1.$  In this case, exactly one of the following two cases occurs: Either $0u$ is right special, in which case by the balance property we must have $x\big|_{1u}=x\big|_{1u0},$ or $1u$ is right special, in which case  $x\big|_{0u}=x\big|_{0u1}.$ Moreover, each of these two cases is possible, meaning that there exists a Sturmian word $x'$ whose factors agree with those of $x$ up to length $n+1$ and differ at length $n+2:$ One admits the factor $0u0$, while the other admits $1u1.$}
\end{remark}

The \emph{slope} of a finite nonempty word $w$ over the alphabet $A$ is defined as $s(w)=\frac{|w|_1}{|w|}$. The slope of an infinite word over $A$, when it exists, is the limit of the slopes of its prefixes.
The set of factors of a Sturmian word depends only on the  slope:

\begin{proposition}[\cite{MoHe38}]\label{MoHe}
 Let $x,y$ be two Sturmian words. Then $\Fac(x)=\Fac(y)$ if and only if $x$ and $y$ have the same slope.
\end{proposition}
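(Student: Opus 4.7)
The plan is to prove the two implications separately, with the backward direction being the substantive one.

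For the forward implication, I would argue as follows. Suppose $\Fac(x)=\Fac(y)$. By Proposition~\ref{prop:sturm} each of $x$ and $y$ is balanced, hence the frequency of the letter $1$ in prefixes converges; call these limits $\alpha_x$ and $\alpha_y$, i.e.\ the slopes. Balancedness further forces every factor of length $n$ of a Sturmian word of slope $\alpha$ to have either $\lfloor n\alpha\rfloor$ or $\lceil n\alpha\rceil$ occurrences of $1$. Pick, for each $n$, any factor $w_n$ of length $n$ from the common language; then $|w_n|_1/n$ lies within $1/n$ of $\alpha_x$ and simultaneously within $1/n$ of $\alpha_y$. Letting $n\to\infty$ gives $\alpha_x=\alpha_y$.

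For the backward implication I would invoke the rotation-coding representation of Sturmian words mentioned in the introduction: every Sturmian word of slope $\alpha$ is the symbolic coding, with respect to the two-interval partition $I_0=[0,1-\alpha)$, $I_1=[1-\alpha,1)$ of the unit circle, of the orbit of some intercept $\rho$ under the rotation $R_\alpha(t)=t+\alpha\pmod 1$. A word $u=u_0u_1\cdots u_{n-1}\in A^n$ then appears as a factor at position $k$ if and only if $R_\alpha^k(\rho)$ lies in the atom
$$J_u \;=\; \bigcap_{j=0}^{n-1} R_\alpha^{-j}(I_{u_j}).$$
The collection $\{J_u : u\in A^n\}$ is a partition of the circle that depends on $\alpha$ but not on $\rho$. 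Since $\alpha$ is irrational (as Sturmian words are aperiodic), the orbit of any $\rho$ is dense, so every nonempty atom is eventually visited. Hence the set of factors of length $n$ of every Sturmian word of slope $\alpha$ is exactly $\{u\in A^n : J_u\neq\emptyset\}$, an object depending solely on $\alpha$. This gives $\Fac(x)=\Fac(y)$ whenever $x,y$ have the same slope.

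The main obstacle is justifying the rotation-coding representation itself, i.e.\ that \emph{every} Sturmian word in the sense of Proposition~\ref{prop:sturm} arises as such a coding. This is the classical theorem of Hedlund and Morse, and I would cite \cite{MoHe38} (or Chapter~2 of \cite{lothaire-book:2002}) rather than reproduce its proof. A small bookkeeping point is the boundary ambiguity when $\rho$ lies on a partition endpoint, resolved by adopting either the lower or upper mechanical convention; this affects the word itself but not the factor set, and therefore is harmless for the statement.
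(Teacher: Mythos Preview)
The paper does not actually prove this proposition: it is stated with the attribution \cite{MoHe38} and no proof environment follows. It is simply quoted as a classical fact from the Sturmian literature, so there is nothing in the paper to compare your argument against.

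That said, your outline is the standard one and is essentially correct. The forward direction via counting letters in common factors is fine. For the backward direction, your atom argument works once one checks that each nonempty $J_u$ is a (half-open) arc of positive length, so that density of the orbit under an irrational rotation guarantees a visit; this is indeed the case because the $J_u$ form a partition of the circle into $n+1$ half-open arcs. Your remark about the boundary ambiguity is the right caveat: choosing the lower versus upper mechanical convention changes at most a single position of the coding and never the factor set. The one step you explicitly defer, namely that every Sturmian word in the combinatorial sense of Proposition~\ref{prop:sturm} is a rotation coding, is exactly the content of the cited Morse--Hedlund theorem, so citing \cite{MoHe38} or \cite[Chap.~2]{lothaire-book:2002} there is appropriate and matches what the paper itself does.
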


Central words play a  fundamental role in the study of  Sturmian words. A word over the alphabet $A$ is {\it central} if it has relatively prime periods $p$ and $q$ and length $p+q-2$. We make use of  the following characterizations of central words (see \cite{Be07} for a survey):

\begin{proposition}\label{prop:centr}
 Let $w$ be a word over $A$. The following conditions are equivalent:

\begin{enumerate}
 \item $w$ is a central word;
 \item $0w1$ and $1w0$ are conjugate;
 \item $w$ is a bispecial factor of some Sturmian word;
 \item $w$ is a palindrome and the words $w0$ and $w1$ (resp.~$0w$ and $1w$) are balanced;
 \item $0w1$ is balanced and is the least element (relative to the lexicographic order) in its conjugacy class;
 \item $w$ is a power of a single letter or there exist central words $p_{1},p_{2}$ such that $w=p_{1}01p_{2}=p_{2}10p_{1}$. Moreover, in this latter case $|p_{1}|+2$ and $|p_{2}|+2$ are relatively prime periods of $w$ and $\min(|p_{1}|+2,|p_{2}|+2)$ is the minimal period of $w$.
\end{enumerate}
\end{proposition}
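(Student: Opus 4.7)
The plan is to prove the six conditions equivalent by running the directed cycle $(1)\Rightarrow(3)\Rightarrow(4)\Rightarrow(2)\Rightarrow(5)\Rightarrow(1)$ and establishing $(1)\Leftrightarrow(6)$ separately by induction on $|w|$. The three principal tools I expect to use throughout are the Fine and Wilf theorem on words with two periods, the reversal closure and uniqueness of right (resp.\ left) special factors of Sturmian words recorded just before the proposition, and the classical identification of Christoffel words with the lex-minimum balanced primitive words of their conjugacy class.

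The implications $(1)\Leftrightarrow(3)$ and $(3)\Rightarrow(4)$ are relatively direct. For $(3)\Rightarrow(4)$, a bispecial factor $w$ of a Sturmian word must be a palindrome because the unique right special factor of length $|w|$ is the reversal of the unique left special factor and both must equal $w$; balancedness of $w0, w1$ (and $0w, 1w$) is inherited from Proposition~\ref{prop:sturm}(2). For $(1)\Rightarrow(3)$, the coprime periods $p,q$ with $|w|=p+q-2$ combined with Fine and Wilf force $w$ into the palindromic shape of the bispecial factor of length $p+q-2$ of the characteristic Sturmian word of an appropriate slope, while $(3)\Rightarrow(1)$ follows by reading off the two return-word lengths of $w$ inside a Sturmian word, which give coprime periods summing to $|w|+2$.

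For the remaining links, since $w$ is a palindrome the reversal of $0w1$ is exactly $1w0$, so $(4)\Rightarrow(2)$ reduces to showing that the balanced primitive word $0w1$ is conjugate to its reversal, which is the classical palindromicity property of Christoffel words. For $(2)\Rightarrow(5)$, the conjugacy $0w1\sim 1w0$ already forces $0w1$ to be balanced (via Fine and Wilf applied to the resulting cyclic structure), and then $0w1$ is the lex-least conjugate because any strictly smaller conjugate would have to begin with a strictly longer run of $0$'s, violating balancedness. The return step $(5)\Rightarrow(1)$ is the standard reading of two coprime periods from a Christoffel factorization. Finally, $(1)\Leftrightarrow(6)$ proceeds by strong induction on $|w|$: the monochromatic case $w=a^n$ is immediate, and in the mixed case the minimal period $p$ of $w$ yields the palindromic prefix of length $p-2$ as a smaller central word $p_1$, producing the factorizations $w=p_1\, 01\, p_2=p_2\, 10\, p_1$ with $p_2$ central of length $q-2$.

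The main obstacle is the step $(4)\Rightarrow(2)$: asserting that a palindromic balanced primitive word lies in the same conjugacy class as its reversal is nontrivial, and a purely combinatorial attack from palindromicity and balancedness alone is awkward. My strategy to sidestep this is to establish $(1)\Leftrightarrow(6)$ independently at the outset, and then derive $(2)$ directly from $(6)$ by exhibiting the explicit rotation that sends $0\cdot p_1\, 01\, p_2\cdot 1$ to $1\cdot p_2\, 10\, p_1\cdot 0$, producing a concrete conjugacy between $0w1$ and $1w0$ without leaving the recursive structure of central words.
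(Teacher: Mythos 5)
The paper itself does not prove Proposition~\ref{prop:centr}: it is quoted as a list of known characterizations of central words, with a pointer to the survey \cite{Be07}. So there is no internal proof to compare against, and your proposal has to stand on its own. Evaluated that way, it has one genuine structural gap.

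Your final plan replaces the edge $(4)\Rightarrow(2)$ by $(6)\Rightarrow(2)$; that replacement itself is clean, since $w=p_101p_2=p_210p_1$ gives $0w1=(0p_10)(1p_21)$ and $1w0=(1p_21)(0p_10)$, which are conjugate by a single swap of blocks. But once you make this substitution, no implication in your scheme has $(4)$ as a \emph{hypothesis}: you establish $(1)\Leftrightarrow(6)$, $(6)\Rightarrow(2)\Rightarrow(5)\Rightarrow(1)$, and $(1)\Rightarrow(3)\Rightarrow(4)$, so condition $(4)$ becomes a sink and its equivalence with the other five is never proved. You still owe an implication of the form $(4)\Rightarrow(1)$ (or into any node of the strongly connected part), and that is precisely the direction you flagged as awkward: from ``$w$ is a palindrome and $w0,w1$ are balanced'' you must extract two coprime periods summing to $|w|+2$. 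You cannot do this by invoking ``the classical palindromicity property of Christoffel words,'' because knowing that $0w1$ is a Christoffel word is tantamount to knowing $(1)$, which is what is to be proved; the same circularity risk affects your uses of ``the classical identification of Christoffel words with the lex-minimum balanced primitive words'' (that is item $(5)\Leftrightarrow(1)$ itself). The standard repair is an induction on $|w|$ directly from $(4)$, using the longest proper palindromic prefix of $w$ to recover the decomposition of item $(6)$.

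Two smaller points. In $(2)\Rightarrow(5)$, your argument for lex-minimality --- ``any strictly smaller conjugate would have to begin with a strictly longer run of $0$'s'' --- is false as stated: a smaller conjugate may begin with the same run of $0$'s and diverge later (among the conjugates of $00100101$, both $00100101$ and $00101001$ begin with $001$; the former is smaller but not by virtue of a longer run). Lex-minimality of $0w1$ is better obtained from the recursive factorization in $(6)$ or from the fact that $0w1$ is a Lyndon word. Finally, in $(1)\Leftrightarrow(6)$ note that the prefix of $w$ of length $p-2$ ($p$ the minimal period) need not be the factor $p_1$ that is followed by $01$ in $w=p_101p_2$ (for $w=010010$ one has $p_1=010$ and $p_2=0$, with $|p_2|+2=3$ the minimal period); since all the words involved are palindromes this is harmless, but the bookkeeping should be stated correctly.
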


Let $w$ be a central word, different from a power of a single letter, having relatively prime periods $p$ and $q$ and length $p+q-2$. The words $0w1$ and $1w0$, which, by Proposition \ref{prop:centr}, are conjugate, are called \emph{Christoffel words}. Let $r=|0w1|_{0}$ and $s=|0w1|_{1}$. It can be proved that $\{r,s\}=\{p^{-1},q^{-1}\}$ modulo $p+q$ \cite[Proposition 2.1]{BeDelRe08}. Moreover, the conjugacy class of $0w1$ and $1w0$ contains exactly $|w|+2$ words. If we sort these words lexicographically and arrange them as rows of a matrix, we obtain a square matrix with remarkable combinatorial properties (see \cite{BoRe06,JeZa04,MaReSc03}).
This matrix depends only on the pair $(r,s)$; we call it
the \emph{$(r,s)$-Christoffel array} and denote it by $\mathcal{A}_{r,s}.$ Two consecutive rows of $\mathcal{A}_{r,s}$ differ only by a swap of two consecutive positions \cite[Corollary 5.1]{BoRe06}. Moreover, the columns are also conjugate and in particular the first one is $0^{r}1^{s}$, while the last one is $1^{s}0^{r}$ (cf.~\cite{MaReSc03}).
For example, consider the Fibonacci word $F$ and its bispecial factor $w=010010$, which has periods $p=5$ and $q=3$. We have $s=q^{-1}=3<5=r=p^{-1}$. In Figure \ref{fig:matrix} we show the $(5,3)$-Christoffel array $\mathcal{A}_{5,3}$. The rows are the lexicographically sorted factors of $F$ with Parikh vector $(5,3)$. The other factor of length $8$ of $F$ is $10100101$.

\begin{figure}
\begin{center}
$$
\mathcal{A}_{5,3}=
\left(
{\begin{tabular}{*{8}{@{\:\:}c}@{}}
 0 & 0 & 1 & 0 & 0 & 1 & 0 & 1\\
 0 & 0 & 1 & 0 & 1 & 0 & 0 & 1\\
 0 & 1 & 0 & 0 & 1 & 0 & 0 & 1\\
 0 & 1 & 0 & 0 & 1 & 0 & 1 & 0\\
 0 & 1 & 0 & 1 & 0 & 0 & 1 & 0\\
 1 & 0 & 0 & 1 & 0 & 0 & 1 & 0\\
 1 & 0 & 0 & 1 & 0 & 1 & 0 & 0\\
 1 & 0 & 1 & 0 & 0 & 1 & 0 & 0
\end{tabular}}
\hspace{1.5mm}\right) $$
\end{center}
\caption{The Christoffel array $ \mathcal{A}_{5,3}$.
\label{fig:matrix}}
\end{figure}

Every aperiodic word (and therefore, in particular, every Sturmian word) contains infinitely many bispecial factors. If $w$ is a bispecial factor of a Sturmian word $x$, then $w$ is central by Proposition \ref{prop:centr}. Moreover, there exists a unique letter $a \in A$ such $aw$ is right special, or equivalently $wa$ is left special. Also, the next (by length) bispecial factor
$w'$ of $x$ is the shortest palindrome beginning with $wa.$ If $p$ and $q$ are the relatively prime periods of $w$ such that $|w|=p+q-2$, then the word $w'$ is central with relatively prime periods $p'$ and $q'$ verifying $|w'|=p'+q'-2$ and either $p'=p+q$ and $q'=p$, or $p'=p+q$ and $q'=q$, depending on the letter $a$. For example, $010$ is a bispecial factor of the Fibonacci word $F$ and has relatively prime periods $3$ and $2$ (and length $3+2-2$). The successive (in length order) bispecial factor of $F$ is $010010$, which is the shortest palindrome beginning with $010\cdot 0$ and has relatively prime  periods $5$ and $3$ (and length $5+3-2$). There exist other Sturmian words having $010$ as a bispecial factor and for which the successive bispecial factor is $01010$ (i.e., the shortest palindrome beginning with $010\cdot 1$) whose relatively prime periods are $5$ and $2$.
These combinatorial properties of central words are needed in our proof of Theorem \ref{theor2}.

\medskip

While Sturmian words have unbounded cyclic complexity (see Theorem \ref{theor1}),  their cyclic complexity takes value $2$ infinitely often.  More precisely, we have the following result.

\begin{lemma}\label{lem:bis+2}
 Let $x$ be a Sturmian word. Then $c_{x}(n)=2$ if and only if $n=1$ or there exists a bispecial factor of $x$ of length $n-2$. Moreover, when $c_{x}(n)=2$, one conjugacy class has cardinality $n$ and the other has cardinality $1$.
\end{lemma}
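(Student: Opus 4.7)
The plan is to separate the case $n=1$ (trivial: both letters appear in any Sturmian word, so $\Fac(x)\cap A = \{0,1\}$ gives the two singletons $\{0\},\{1\}$, yielding $c_x(1)=2$ with both class sizes equal to $1$) from the substantive case $n \geq 2$. Throughout the substantive case I exploit the standing facts $p_x(n)=n+1$ and $a_x(n)=2$ for Sturmian $x$.

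For the direction $(\Rightarrow)$, assume $c_x(n)=2$. Since $a_x(n)=2 \leq c_x(n)=2$, the two conjugacy classes coincide with the two Parikh classes. Writing their cardinalities $k$ and $n+1-k$, both must divide $n$ (a conjugacy class of a length-$n$ word has size dividing $n$), and a short arithmetic step using $\gcd(n,n+1)=1$ forces $\{k,n+1-k\}=\{1,n\}$. The size-$n$ class consists of the $n$ distinct cyclic shifts of a primitive word $u$, all of which are factors of $x$ and hence balanced. I then consider the lex-least conjugate of $u$: a short lex argument (it must begin with $0$ since $0$ appears, and it cannot end with $0$, else iterating the comparison with the shift-by-one conjugate forces $u = 0^n$, contradicting primitivity with both letters) shows it is of the form $0w1$, and Proposition~\ref{prop:centr}(5) identifies $w$ as central. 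The punch line is that $1w0$, being a conjugate of $0w1$ by Proposition~\ref{prop:centr}(2), is also a factor of $x$; reading the length-$(n-1)$ prefix and suffix of each of $0w1$ and $1w0$ gives all four of $0w, w1, 1w, w0$ as factors, so $w$ is bispecial of length $n-2$.

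For the direction $(\Leftarrow)$, given a bispecial factor $w$ of length $n-2$, Proposition~\ref{prop:centr} makes $w$ central with coprime periods $p,q$ satisfying $|w|=p+q-2$, so $n=p+q$, and the conjugacy class of the Christoffel word $0w1$ contains exactly $n$ elements (as recorded after Proposition~\ref{prop:centr}). I then invoke the Christoffel-array description: the $n$ rows of $\mathcal{A}_{r,s}$, with $r=|0w1|_0$ and $s=|0w1|_1$, coincide with the set of length-$n$ factors of $x$ of Parikh vector $(r,s)$. Since $p_x(n)=n+1$ and $a_x(n)=2$, exactly one additional factor remains, with complementary Parikh vector, forming a singleton conjugacy class. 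Hence $c_x(n)=2$, with class sizes $n$ and $1$, which also yields the moreover part.

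The main obstacle is the invocation in $(\Leftarrow)$ that all $n$ conjugates of $0w1$ actually appear as factors of $x$ rather than some proper subset; I plan to take this from the standard Christoffel-array theory of Sturmian words rather than reprove it here. The $(\Rightarrow)$ direction is comparatively clean, since only the single additional conjugate $1w0$ is needed to recover all four one-letter extensions of $w$.
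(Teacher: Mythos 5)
There is a genuine gap at the pivot of your forward direction: the claim that the cardinalities $k$ and $n+1-k$ of the two conjugacy classes of factors ``must divide $n$''. What divides $n$ is the cardinality of a \emph{full} conjugacy class inside $A^n$; the classes you are counting are intersections of such classes with $\Fac(x)\cap A^n$, and nothing forces such an intersection to be the whole class. (For the Fibonacci word at $n=4$, the factor class $\{0010,0100\}$ has size $2$ and sits properly inside the size-$4$ full class of $0010$, since $1000$ and $0001$ are not factors.) So a priori the two classes could have sizes like $2$ and $n-1$, each a proper subset of a full class, and your coprimality computation --- which is otherwise fine: $k$ and $n+1-k$ coprime and both dividing $n$ gives $(k-1)(n-k)\le 0$ --- never gets off the ground. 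Everything downstream depends on this: you need the lex-least conjugate of $u$ to actually occur in $x$ before you can invoke Proposition~\ref{prop:centr}(5), and you need $1w0$ to occur in $x$ before you can read off all four extensions $0w,1w,w0,w1$; both require the factor class to be the \emph{entire} conjugacy class, which is exactly what is unproved.

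The repair is to argue locally rather than through the whole class, which is what the paper does: it takes the (unique) left special factor of the relevant length, producing two specific abelian-equivalent factors of length $n$ that are both guaranteed to occur in $x$; since $c_x(n)=a_x(n)=2$ forces abelian-equivalent factors to be conjugate, these two words are conjugate, and Proposition~\ref{prop:centr}(2) and (4) then yield that $w$ is a palindrome, hence (using closure of $\Fac(x)$ under reversal) bispecial in $x$. Your backward direction and the ``moreover'' clause are essentially the paper's: both rest on the standard description of the length-$(|w|+2)$ factors around a bispecial $w$ as one singleton $awa$ plus the full conjugacy class of the Christoffel word $awb$, which the paper likewise quotes without proof. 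If you do want to keep your Lyndon-word route for the forward direction, you must first independently establish that at least one of the two abelian classes is closed under conjugation, and that is not easier than the left-special argument.
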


\begin{proof}
If $n=1$ clearly $c_{x}(n)=2$, so let us suppose $n>1$. If $w$ is a bispecial factor of length $n-2$ of a Sturmian word, then there exists a letter $a$ such that the factors of $x$ of length $n$ are precisely $awa$, and all the conjugates of $awb$ for the letter $b\neq a$  (cf.~\cite{DelMig94}). Hence $c_{x}(n)=2$.
Conversely, suppose $c_{x}(n)=2$ for some $n>1$. Then the cyclic classes of factors of $x$ of length $n$ correspond to the abelian classes of factors of $x$ of length $n$. Let $w$ (possibly empty) be the left special factor of $x$ of length $n-2.$
Then $01w$ and $10w$ are factors of $x.$  Since  $01w$ and $10w$ are abelian equivalent, they must be conjugate. Thus by Proposition \ref{prop:centr}, $w$ is a palindrome, whence $w$ is a bispecial factor of $x$.
\end{proof}

It follows that  $\liminf_{n\to \infty} c_{x}(n)=2$. However, as we see in Section \ref{sec:tm}, this is not a characterization of Sturmian words.

\medskip
\noindent {\bf Theorem 2.}
{\it Let $x$ be a Sturmian word. If a word $y$ has the same cyclic complexity
as $x$ then, up to renaming letters,  $y$ is a Sturmian word having the same slope as $x$.
}
\medskip

\begin{proof}
Since $y$ has the same cyclic complexity as $x$, we have that in particular $2=c_{x}(1)=c_{y}(1)$, so $y$ is a binary word. We fix for $x$ and $y$ the alphabet $\{0,1\}$. Since $x$ is aperiodic,  by Theorem \ref{theor1} $c_{x}$ is unbounded. Since $x$ and $y$ have the same cyclic complexity  we have, still by Theorem \ref{theor1}, that $y$ is aperiodic.

Up to exchanging $0$ and $1$ in $x,$  we can assume that $x$ contains the factor $00$ so that the factors of $x$  of length $2$ are $00,01,10.$  We claim that $y$ too has exactly three factors of length $2.$ In fact, since $y$ is aperiodic, $y$ has at least three distinct factors of length $2.$ If $y$ had four factors of length $2$, then $y$ would have three abelian classes of length $2$ and hence $c_y(2)=3,$  a contradiction. Thus, up to exchanging $0$ and $1$ in $y$, we can assume that $x$ and $y$ have the same factors of length $2$.

We now prove that $x$ and $y$ have the same set of factors. This implies that $y$ is a Sturmian word and has the same slope as $x$ by Proposition \ref{MoHe}.
Suppose to the contrary that there exists a least positive integer $n>0$ such that the factors of $x$ and $y$ of length $n+2$ differ.
In what follows, we assume:

(*) Let $x,y\in \{0,1\}^\nats$ be infinite aperiodic words having the same factors of length $2.$ Assume further that $x$ is Sturmian and $\Fac(x)\neq \Fac(y).$ Let $n$ be the least positive integer such that the factors of $x$ and $y$ of length $n+2$ differ. Let $a\in \{0,1\}$ and $w\in \{0,1\}^n$ be such that $aw$ is the unique right special factor of $x$ of length $n+1.$ Let $b=1-a$ so that $\{a,b\}=\{0,1\}.$

\medskip

\begin{lemma} Assume $x$ and $y$ satisfy (*). Then $x$ and $y$ have a common bispecial factor of length $n.$
\end{lemma}

\begin{proof} This is essentially Case 2 in Remark~\ref{why}. We begin by observing that $w$ is the unique right special factor of $x$ and of $y$ of length $n.$
We claim that $w$ is a bispecial factor of both $x$ and $y.$ If not, then $aw$ is the unique right special factor of both $x$ and $y$ of length $n+1,$ and so $x$ and $y$ would have the same set of factors of length $n+2,$ a contradiction. \end{proof}

\begin{lemma} Assume $x$ and $y$ satisfy (*). Then $c_x(n+2)=2.$
\end{lemma}

\begin{proof} This follows immediately from the previous lemma together with Lemma \ref{lem:bis+2}. \end{proof}

\begin{lemma}\label{lem:bw}
Assume $x$ and $y$ satisfy (*). Then either $c_y\neq c_x,$ or  $bw$ is the unique right special factor of $y$ of length $n+1$ and every occurrence of $aw$ in $y$ is followed by $b$.\end{lemma}

\begin{proof}Assume  $c_y=c_x.$ Then by the previous lemma we have $c_x(n+2)=c_y(n+2)=2.$ It follows that exactly one of  $aw$ or $bw$ is right special in $y.$ Since $y$ is aperiodic, at least one of the two must be right special.  On the other hand, if both were right special, then  $y$ would have at least $3$ abelian classes of factors of length $n+2$ (namely those of $awa,bwb$ and $awb)$ whence  $c_y(n+2)\geq 3,$ a contradiction. We claim that $bw$ is right special in $y.$ In fact,
suppose to the contrary that $aw$ is right special in $y$. In this case, exactly one of $bwb$ and $bwa$ is a factor of $y$. If $bwb$ is a factor of $y$, then as above $y$ would have at least three abelian classes of factors of length $n+2$ and hence $c_y(n+2)\geq 3,$ a contradiction. If $bwa$ is a factor of $y$, then $x$ and $y$ have the same factors of length $n+2$, a contradiction. This proves that $bw$ is the unique right special factor of $y$ of length $n+1.$ As before, exactly one of $awa$ and $awb$ is a factor of $y$. If $awa$ is a factor of $y$, then as argued above $y$ would have at least three abelian classes of factors of length $n+2$ and hence $c_y(n+2)\geq 3,$ a contradiction. Thus  $y\big|_{aw}=y\big|_{awb}.$ \end{proof}

By Remark~\ref{why}, there exists a Sturmian word $y'$ such that $x$ and $y'$ have the same set of factors up to length $n+1,$ after which $aw$ is right special in $x$ while $bw$ is right special in $y'.$
Thus by the previous lemma, if $c_x=c_y,$ then  $y$ and $y'$ have the same factors  of length $n+2.$ Let  $w_{x}$ (resp.~$w_{y}$) be the shortest bispecial factor of $x$ (resp.~of $y$) whose length is strictly greater than $|w|=n$.
Note that $w_{y}$ is also bispecial for $y'$.
Then $\Fac(y')\cap A^j=\Fac(y)\cap A^j$ for every $j\leq |w_y|+1$.

\begin{lemma} Assume $x$ and $y$ satisfy (*). Then either $c_y\neq c_x,$ or  $|w_{x}|>|w_{y}|.$\end{lemma}

\begin{proof} Assume $c_y=c_x.$ Then as in the previous lemma we have $c_x(n+2)=c_y(n+2)=2.$ Let $p'$ and $q'$, with $p'>q'$, be the two relatively prime periods of $w$ such that $n=|w|=p'+q'-2$.
We have that $\{|w_{x}|,|w_{y}|\}=\{2p'+q'-2,p'+2q'-2\}$, hence $w_{x}$ and $w_{y}$ cannot have the same length.
If $|w_{x}|<|w_{y}|$, then by Lemma \ref{lem:bis+2}
\[
2=c_x(|w_x|+2)=c_y(|w_x|+2)=c_{y'}(|w_x|+2)>2,
\]
a contradiction. \end{proof}

Let $p'$ and $q'$, with $p'>q'$, be the two relatively prime periods of $w.$  By the previous lemma we have $|w_{x}|>|w_{y}|.$  So $w_{x}$ has periods $p'+q'$ and $p'$ and length $2p'+q'-2$, while $w_{y}$ has periods $p'+q'$ and $q'$ and length $p'+2q'-2.$
Set $p=p'+q'$ and $q=p'$, so that $|w_{y}|=2p-q-2$ and $|w_{x}|=p+q-2$.  Notice that $p+q>2p-q$ since $p'>q'$. We use this fact in what follows without explicit mention.

\begin{lemma} Assume $x$ and $y$ satisfy (*). Then either $c_y\neq c_x,$ or $w_{y}$ is a strongly bispecial factor of $y$, i.e., $0w_{y}0$, $0w_{y}1$, $1w_{y}0$ and $1w_{y}1$ are all factors of $y$.
\end{lemma}

\begin{proof} Assume that $c_y=c_x.$ Then one of the following cases must hold: 

\begin{enumerate}
\item Neither $0w_y$ nor $1w_y$ is right special in $y$;
\item $0w_y$ is right special in $y$ and every occurrence of $1w_y$ is followed by $1$ in $y$;
\item $0w_y$ is right special in $y$ and every occurrence of $1w_y$ is followed by $0$ in $y$;
\item $1w_y$ is right special in $y$ and every occurrence of $0w_y$ is followed by $0$ in $y$;
\item $1w_y$ is right special in $y$ and every occurrence of $0w_y$ is followed by $1$ in $y$;
\item Both $0w_y$ and $1w_y$ are right special factors of $y$.
\end{enumerate}

In Case 1 $y$ does not have right special factors of length $|w_y|+1$, hence $y$ would be ultimately periodic, a contradiction.

Case 2 also implies that $y$ is ultimately periodic. If no nonempty prefix of $1w_y$ is right special in $y$, then every occurrence of $1$ in $y$ is an occurrence of $1w_y1$, and hence $y$ is ultimately periodic. Let $z$ (possibly empty) be the longest prefix of $w_y$ such that $1z$ is right special in $y$. Clearly, $|z|<|w_y|$. So we can write $1w_y=1zu$ for some nonempty word $u$. Since $1z$ is right special in $y$ and hence in $y'$, we have that $\tilde{z}1$ is left special in $y'$ and hence in $y$. Thus $\tilde{z}1$ is a prefix of $w_y$, whence $u$ begins with $1$ and $z=\tilde{z}$. Therefore each occurrence of $1z1$ is an occurrence of $1zu1=1w_y1$. Since $1w_y1$ and $1z1$  are both palindromes, $1z1$ is also a suffix of $1w_y1$, whence $y$ is ultimately periodic.

In Case 3, either $\Fac(y')\cap A^j=\Fac(y)\cap A^j$ for every $j\leq |w_y|+2$, and hence $2=c_{y'}(|w_y|+2)=c_{y}(|w_y|+2)=c_{x}(|w_y|+2)>2$, contradiction, or by Remark~\ref{why}
there exists a Sturmian word $y''$ such that $\Fac(y'')\cap A^j=\Fac(y)\cap A^j$ for every $j\leq |w_y|+2,$ in which case $2=c_{y''}(|w_y|+2)=c_{y}(|w_y|+2)=c_{x}(|w_y|+2)>2$, again a contradiction.

Case 4 is symmetric to Case 2 and Case 5 is symmetric to Case 3, so the only remaining case is that both $0w_y$ and $1w_y$ are right special factors of $y$ as required. \end{proof}

\begin{lemma}Assume $x$ and $y$ satisfy (*). Then either $c_y\neq c_x,$ or $c_y(|w_y|+2)=c_y(2p-q)=3.$
\end{lemma}

\begin{proof}Assume $c_x=c_y.$ Then from the previous lemma we have that $w_y$ is a strong bispecial factor of $y.$ Thus,   the factors of $y$ of length $|w_y|+2$ are precisely the factors of $y'$ of the same length, plus one other factor which is either $0w_y0$ or $1w_y1$, whence $c_y(|w_y|+2)=c_y(2p-q)=3$.
\end{proof}

Returning to the proof of Theorem \ref{theor2}, let $r=|0w_{x}1|_{0}$ and $s=|0w_{x}1|_{1}$. Since we supposed that $11$ is not a factor of $x$, we have $r>s$.  In what follows we assume that $c_x=c_y.$
In view of the previous lemmas, we have that if $x$ and $y$ satisfy (*),  then $|w_x|>|w_y|$ and $c_y(|w_y|+2)=c_y(2p-q)=3.$ We consider four cases depending on $s:$  $s=1,$ $s=2,$ $s=3$ and $s>3.$ Each gives rise to a contradiction.

\medskip

\emph{Case $s=1$}. This case cannot happen since otherwise we would have $w_{x}=0^{n+1}$, $w=0^{n}$ and $w_{y}=0^{n}10^{n}$,  against the hypothesis that $|w_{x}|>|w_{y}|$.

\medskip

\emph{Case $s=2$}. In this case we have $w=0^{n}$, $w_{x}=0^{n}10^{n}$ and $w_{y}=0^{n+1}$. Since $w_{x}$ is right special in $x$, we have that $10^{n}1$ and $0^{n+1}$ are factors of $x$. Let us look at the factors of $x$ of length $2n+4$. Among them, we have $v_{1}=10^{n}10^{n+1}1$, $v_{2}=10^{n+1}10^{n}1$ and $v_{3}=0^{j}10^{n}10^{k}$, for some $j,k$ such that $j+k=n+2$. Moreover, since one of $10^{n}10^{n}1$ and $10^{n+1}10^{n+1}$ is a factor of $x$, we also have that either $v_{4}=10^{n}10^{n}10$ or $v'_{4}=10^{n+1}10^{n+1}$ is a factor of $x$. Since these four factors are not conjugate to one other, we have $c_{x}(2n+4)\geq 4$.

Let us now prove that $c_{y}(2n+4)=3$. Since $w_y=0^{n+1}$ is strongly bispecial in $y$, we have that $0^{n+2}1$ is a factor of $y$ and hence there exists a factor of $y$ of length $2n+3$ beginning with $0^{n+2}1$. This factor must be equal to $0^{n+2}10^{n}$ since otherwise $y$ would contain both $0^{t+2}$ and $10^{t}1$ for some $t\leq n-1$, and hence also $x$ would contain these factors, against the hypothesis that $x$ is Sturmian and therefore balanced. We have thus proved that $y$ contains a factor of length $2n+3$ with exactly one $1$. Since $2n+3$ is the length of a bispecial factor of $x$ plus 2, we have by Lemma \ref{lem:bis+2} that $c_{y}(2n+3)=c_{x}(2n+3)=2$. Since $y$ contains factors of length $2n+3$ with two $1$'s, these must be all conjugates one to each other. Since $w_{y}$ is a strongly bispecial factor of $y$, we have that $10^{n+1}1$ is a factor of $y$ and therefore the factors of length $2n+3$ of $y$ with two $1$'s are all conjugate to $10^{n+1}10^{n}$.

By Lemma \ref{lem:bw}, $10^n1$ is not a factor of $y$; neither is $10^t1$ for $t<n$.
Let $10^{t}1$ be a factor of $y$, with $t>n+1$. Then we can prove that $t=2n+2$. Indeed, on the one hand we cannot have $t>2n+2$ since $0^{2n+3}$ is not a factor of $y$---because we know that the factors of length $2n+3$ of $y$ contain one or two $1$'s. On the other hand, we cannot have $t<2n+2$ because we know that the factors of length $2n+3$ with exactly two $1$'s are all conjugate to $10^{n+1}10^{n}$. We have therefore proved that in $y$ two consecutive occurrences of $1$ are separated by either $n+1$ or $2n+2$ many $0$'s. This implies that at length $2n+4$ we have in $y$: one conjugacy class containing all the factors with exactly one $1$; one conjugacy class containing only the factor $10^{2n+2}1$; one conjugacy class containing all the other factors, that are of the form $0^{j}10^{n+1}10^{k}$, with $j+k=n+1$. Hence, $c_{y}(2n+4)=3$ and we are done.

\medskip

\emph{Case $s=3$}. In this case $w_x$ is a central word with two $1$'s. The only central words with two $1$'s not containing $11$ as a factor are of the form $0^m10^m10^m$ or $0^m10^{m+1}10^m$ for some $m>0$. This implies that $w=0^m10^m$, and since $|w_x|>|w_y|$, we deduce  $w_y=0^m10^m10^m$ and $w_x=0^m10^{m+1}10^m$. Note that we have $m=p'-2=q'-1$, so that $2p-q=3m+4$.

It is readily verified that each of $0^{m+1}10^{m+1}10^{m}$, $0^{m+1}10^{m}10^{m+1}$, $0^{m}10^{m+1}10^{m}1$ and $0^{m-1}10^{m+1}10^{m+1}1$ is a factor of $x$ and no two of them are conjugate. Therefore, $c_{x}(2p-q)\geq 4$, contradicting that $c_{y}(2p-q)=3$.

\medskip

\emph{Case $s>3$}.  As in the previous case, we prove that $c_{x}(2p-q)\geq 4$. It is known that among the $p+q+1$ factors of $x$ of length $p+q$, there is one factor with a Parikh vector $\mathcal{Q}$ and the remaining $p+q$ factors with the other Parikh vector $\mathcal{Q'}$, these latter being in the same conjugacy class, which is in fact the conjugacy class of the Christoffel word $0w_{x}1$ (see Lemma \ref{lem:bis+2}).

We can build the $(r,s)$-Christoffel array $\mathcal{A}_{r,s}$ (recall that $r+s=p+q$).
The factors of length $2p-q$ of $x$ can be obtained by removing the last $2q-p$ columns from $\mathcal{A}_{r,s}$ (of course, in this way some rows can be equal and therefore some factors appear more than once). Let $ \mathcal{A'}_{r,s}$ be the matrix made up of the first $2p-q$ columns of $\mathcal{A}_{r,s}.$  In what follows, we let $\mathcal{A'}_i$ denote the $i$-th row of $\mathcal{A'}_{r,s}$. Recall that $\{r,s\}=\{p^{-1},q^{-1}\}\mod (p+q)$. We separate  two subcases: $s=p^{-1}$ or $s=q^{-1}.$

 \begin{remark}\label{rem:abelian-classes}
\rm{Before treating the two remaining subcases in the proof of Theorem~\ref{theor2}, we recall here some properties of the arrays $\mathcal{A}_{r,s}$ and $\mathcal{A'}_{r,s}$ which will be used. Each column of $\mathcal{A}_{r,s}$ and $\mathcal{A'}_{r,s}$ has $r+s$ entries. The first column in each case has $r$-many $0$'s at the top followed by $s$-many $1$'s at the bottom.
Then each subsequent column is obtained from the previous column by rotating upwards by an amount equal to $s$ as illustrated in  Figure~\ref{fig:matrix}. Any two consecutive rows of the array  $\mathcal{A}_{r,s}$ differ precisely in two consecutive positions where the upper row has $01$ and the lower row $10.$ Thus two consecutive rows $\mathcal{A'}_i$ and $\mathcal{A'}_{i+1}$ of  $ \mathcal{A'}_{r,s}$ either differ in the same way in  two consecutive positions,  or are equal, or differ only in their last entry. They are therefore abelian equivalent, except in the last case, in which case all rows $\mathcal{A'}_j$ for $j\leq i$ are abelian equivalent and all rows $\mathcal{A'}_j$ with $j\geq i+1$ are abelian equivalent.}
\end{remark}

\medskip

\emph{Subcase $s=p^{-1}$}. In this case, we prove that the bottom three rows in $ \mathcal{A'}_{r,s}$ are distinct and begin and end with $1$. It follows that each of these rows is unique in its conjugacy class since all other conjugates contain an occurrence of $11.$ Together with the first row of $ \mathcal{A'}_{r,s}$, which is not abelian equivalent to any of the bottom three rows, we obtain at least four conjugacy classes of factors of $x$ of length $2p-q$. This subcase is depicted in Figure \ref{fig:matrix_rs}.

Since $s\geq 3,$ it follows that the bottom three rows in $\mathcal{A'}_{r,s}$ begin with $1.$ Because $sp=1\mod(p+q)$, writing $2p-q=3p-(p+q)$ it follows that $s(2p-q)=3\mod(p+q)$ which means that the bottom three rows each end with $1$ and are pairwise abelian equivalent. 
Moreover, the bottom three rows in $\mathcal{A'}_{r,s}$ are distinct. In fact, since $sp=1\mod(p+q),$ it follows that the bottom two rows differ in the $p$'th entry. More precisely, the $p$'th column of $\mathcal{A'}_{r,s}$ has $(s-1)$-many $1$'s at the top, followed by $r$-many $0$'s then a single $1$ at the bottom.
Similarly, because $s(p-q)=2\mod(p+q),$ it follows that  $\mathcal{A'}_{p+q-2}$ differs from each of $\mathcal{A'}_{p+q-1}$ and $\mathcal{A'}_{p+q}$ in the $(p-q)$'th entry ($\mathcal{A'}_{p+q-2}$ has a $0$ while the other two have a $1).$

\begin{figure}
$$\mathcal{A'}_{r,s}=
\bordermatrix{
  &1       &        & p-q       &        & p      &        &2p-q     \cr
  &\vdots  &\ddots  & \vdots    &\ddots  & \vdots & \ddots &\vdots   \cr
  &\vdots  &\ddots  & \vdots    &\ddots  & \vdots & \ddots &\vdots   \cr
   &1      &\cdots  & \cdots    &\cdots  & \cdots & \cdots & 0   \cr
   &1      & \cdots & 0         & \cdots & \cdots & \cdots & 1      \cr
   &1      & \cdots & 1         & \cdots & 0      & \cdots & 1      \cr
   &1      & \cdots & 1         & \cdots & 1      & \cdots & 1      \cr
}$$

\caption{The matrix $ \mathcal{A'}_{r,s}$ in the Subcase $s=p^{-1}$ in the proof of Theorem \ref{theor2}.\label{fig:matrix_rs}}
\end{figure}
%

\medskip

\begin{figure}
$$\mathcal{A'}_{r,s}=
\bordermatrix{
  &1       &        & p-q       &        & p      &        &2p-q   \cr
   &0      &v  & 0    &1y  & 0 & 1z & 0   \cr
   &0      & v & 0    & 1y & 1 & 0z & 0      \cr
   &0      & v & 1    & 0y & 1 & 0z & 0      \cr
   &\vdots      & \cdots & 1         & \cdots & 1      & \cdots & 1      \cr
   &\vdots  &\ddots  & \vdots    &\ddots  & \vdots & \ddots &\vdots   \cr
  &\vdots  &\ddots  & \vdots    &\ddots  & \vdots & \ddots &\vdots   \cr
}$$

\caption{The matrix $ \mathcal{A'}_{r,s}$ in the Subcase $s=q^{-1}$ in the proof of Theorem \ref{theor2}.\label{fig:lyon}}
\end{figure}

\emph{Subcase $s=q^{-1}$}. In this case we prove that  the top three rows of the matrix $ \mathcal{A'}_{r,s}$ are pairwise distinct, neither is conjugate to another, and are pairwise abelian equivalent. Combined with the bottom row, which is not abelian equivalent to the top three rows, we obtain at least four distinct conjugacy classes of $x$ of length $2p-q$.

Reasoning as in the previous subcase, since $sp=-1\mod(p+q),$ it follows that the first row differs from the second and third rows in the $p$'th entry (the first row has a $0$ while the other two have $1).$ Thus  $\mathcal{A'}_1\neq \mathcal{A'}_2$ and $\mathcal{A'}_1\neq \mathcal{A'}_3.$ Since $s(p-q)=-2\mod(p+q)$
it follows that $\mathcal{A'}_2$ differs from $ \mathcal{A'}_3$ in the $(p-q)$'th entry.
And since $s(2p-q)=-3\mod(p+q)$, it follows that the top three rows
are in a different abelian class than all other rows.

Clearly  $\mathcal{A'}_1$ begins with the maximum run of $0$'s. Since $\mathcal{A'}_1$ and $\mathcal{A'}_2$ differ only in positions $p$ and $p+1$ and since both have a $1$ in position $p-q+1,$ it follows that $\mathcal{A'}_2$ also begins with the maximum run of $0$'s. Since $s(2p-q)=-3\mod(p+q)$, it follows that $\mathcal{A'}_3$ differs from $\mathcal{A'}_4$ only in the last entry, and hence $\mathcal{A'}_3$ ends with the maximum run of $0$'s. Since $\mathcal{A'}_2$ and $\mathcal{A'}_3$ differ only in positions $p-q$ and $p-q+1$, it follows that $\mathcal{A'}_2$ too ends with the maximum run of $0$'s. Hence, $\mathcal{A'}_2$ is unique in its conjugacy class.

It remains to show that $\mathcal{A'}_1$ and $\mathcal{A'}_3$ are not conjugate.
We can write $\mathcal{A'}_1=0v01y01z0$ and $\mathcal{A'}_3=0v10y10z0$ for words $v,y,z$ such that $|0v0|=p-q$ and $|0v01y0|=p$ (see Figure \ref{fig:lyon}). Let $m$ be such that $0^{m+1}1$ is a prefix of $\mathcal{A'}_1$. Then $0^{m}$ is a suffix of $\mathcal{A'}_1$ since the top row of  $ \mathcal{A}_{r,s}$ has a $1$ in column $2p-q+1.$
Thus the only conjugate of $\mathcal{A'}_1$ that is possibly a factor of $x$ is $v01y01z00$, as all other conjugates contain the forbidden factor $0^{m+2}$.
If $\mathcal{A'}_1$ and $\mathcal{A'}_3$ were conjugate, we would therefore have $v01y01z00=0v10y10z0$, hence each of $v,y,z$ would be a power of $0$. Since $w$ is the prefix of $0^{-1}\mathcal{A'}_1$ of length $p-2$, we deduce that $|w|_1\leq 1$, whence  $|w_x|_1\leq 2$, and thus $s\leq 3$, contradicting our assumption that $s>3.$ This completes the proof of Theorem \ref{theor2}.\end{proof}

Having established Theorem \ref{theor2}, one may ask: Given two infinite words $x$ and $y$ with the same cyclic complexity, what can be said about their languages of factors?
First, there exist two periodic words having same cyclic complexity but whose languages of factors are not isomorphic nor related by mirror image. For example, let $\tau$ be the morphism: $0\mapsto 010$, $1\mapsto 011$ and consider the words $x=\tau((010011)^\omega)$ and $x'=\tau((101100)^\omega)$.
One can verify that $x$ and $x'$ have the same cyclic complexity up to length $17$ and, since
each has period $18$, the cyclic complexities of $x$ and $y$ agree for all $n.$ Furthermore, it is easy to show that even two aperiodic words can have the same cyclic complexity but different languages of factors. For example, let $x$ be an infinite binary word such that $\MF(x)=\{000111\}$ and $y$ an infinite binary word such that $\MF(y)=\{001111\}$. Then the languages of factors of $x$ and $y$ are not isomorphic, nor related by mirror image, yet the two words have the same cyclic complexity. However, we do not know if this can still happen with the additional hypothesis of linear complexity, for example.

We end this section by comparing cyclic complexity and minimal forbidden factor complexity. In \cite{MiReSci02} the authors proved the following result.

\begin{theorem}\label{theor:mff}
 Let $x$ be a Sturmian word and let $y$ be an infinite word  such that for every $n$ one has $p_{x}(n)=p_{y}(n)$ and $mf_{x}(n)=mf_{y}(n)$, i.e., $y$ is a word having the same factor complexity and the same minimal forbidden factor complexity as $x$. Then, up to isomorphism, $y$ is a Sturmian word having the same slope as $x$.
\end{theorem}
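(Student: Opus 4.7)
My plan is to split the argument into three parts: (a) use the factor complexity hypothesis to force $y$ to be, up to renaming, a binary Sturmian word; (b) translate the minimal forbidden factor complexity into combinatorial information about the bispecial factors; and (c) exploit the rigid recursive structure of central words to show that the bispecial factors of $x$ and $y$ coincide, which will imply they have the same slope. For (a), since $p_y(1) = p_x(1) = 2$ the word $y$ uses exactly two letters, so after renaming $y \in \{0,1\}^{\nats}$; since $p_y(n) = n+1 > n$ for every $n$, Morse--Hedlund gives $y$ aperiodic, hence Sturmian. Using $mf_x(2) = mf_y(2) = 1$, I may further rename the letters of $y$ so that $11$ is the unique forbidden length-$2$ word of both $x$ and $y$, which gives $\Fac(x) \cap \{0,1\}^2 = \Fac(y) \cap \{0,1\}^2 = \{00, 01, 10\}$.

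For part (b), I claim that in a binary Sturmian word $z$ the minimal forbidden factors are exactly the words $awb$ with $a, b \in \{0,1\}$, $w$ a bispecial factor of $z$, and $awb \notin \Fac(z)$. If $v = aub$ is minimal forbidden, then $au \in \Fac(z)$ must have a right extension distinct from $b$, so $au\bar b \in \Fac(z)$, and symmetrically $\bar a u b \in \Fac(z)$; thus $u$ is both right and left special, i.e.\ bispecial. Conversely, for each bispecial $w$ of $z$, the four candidate extensions $\{0w0, 0w1, 1w0, 1w1\}$ contain exactly three factors of $z$ (two from the right-special one among $\{0w, 1w\}$, which exists since the right-special of length $|w|+1$ must end in the right-special $w$ of length $|w|$, and one from its non-right-special partner), and the one non-factor is easily checked to be a minimal forbidden word. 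Since a Sturmian word has a unique right special factor of each length, it has at most one bispecial per length, so $mf_z(n+2) \in \{0, 1\}$ is the indicator of whether $z$ has a bispecial factor of length $n$. The hypothesis $mf_x = mf_y$ therefore forces the bispecial factors of $x$ and $y$ to occur at exactly the same lengths.

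For part (c), enumerate the bispecials of $x$ and $y$ in increasing length as $w_0, w_1, w_2, \ldots$ and $w'_0, w'_1, w'_2, \ldots$, and prove $w_i = w'_i$ by induction on $i$. The base cases $w_0 = \epsilon$ and $w_1 = 0$ follow from the normalization in (a). For the inductive step, using item~(6) of Proposition~\ref{prop:centr} and the recursion recalled just above Theorem~\ref{theor2}: if $w_i$ has coprime periods $p > q$ with $|w_i| = p+q-2$, then $w_{i+1}$ is the shortest palindrome beginning with $w_i a$ for some uniquely determined letter $a \in \{0,1\}$, and has coprime periods either $(p+q, p)$ or $(p+q, q)$, giving distinct lengths $2p+q-2$ and $p+2q-2$. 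Since $|w_{i+1}| = |w'_{i+1}|$ by (b), the period pair of $w_{i+1}$ is determined, which in turn fixes the letter $a$ (the two possible period pairs correspond to the two possible letters), and then $w_{i+1}$ is determined as the shortest palindrome prefixed by $w_i a$; so $w_{i+1} = w'_{i+1}$. Finally, the bispecial factors of any Sturmian word of slope $\theta$ are exactly the palindromic prefixes of its characteristic Sturmian word $c_\theta$, and every factor of $c_\theta$ is a factor of some bispecial; hence sharing all bispecials forces $\Fac(x) = \Fac(y)$, and Proposition~\ref{MoHe} concludes. The main obstacle I anticipate is this inductive step: one must verify carefully that distinct choices of the extending letter yield distinct period pairs (so that the length of $w_{i+1}$ pins down $a$), and handle the smallest-index degeneracies (when the periods of $w_i$ are $(1,1)$ or $(2,1)$) by direct inspection so that the induction gets off the ground.
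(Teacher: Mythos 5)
The paper does not actually prove Theorem~\ref{theor:mff}: it is quoted verbatim from \cite{MiReSci02}, so there is no in-paper argument to compare yours against. On its own merits, your proof is correct and self-contained. Part~(a) is immediate from Morse--Hedlund and the count $p_y(2)=3$ (the missing length-$2$ word must be $00$ or $11$, since a binary aperiodic word contains both $01$ and $10$). The pivot of part~(b) --- that for a recurrent binary Sturmian word $z$ the minimal forbidden factors of length $n+2$ are in bijection with the bispecial factors of length $n$, so $mf_z(n+2)\in\{0,1\}$ records exactly the lengths at which bispecials occur --- is sound (recurrence is needed for the left-extension half of the argument, and you have it since $y$ is already known to be Sturmian); this is in fact the same structural observation that drives the original proof in \cite{MiReSci02}. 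Part~(c) then works because the two candidate successors of a central word with coprime periods $p>q$ have lengths $2p+q-2$ and $p+2q-2$, which differ by $p-q>0$, so the common length of the next bispecial pins down the extending letter and hence the word itself; the only degeneracy $p=q=1$ occurs for $w=\epsilon$ and is absorbed by your base cases, which are themselves forced by the normalization $\Fac\cap A^2=\{00,01,10\}$. The concluding step (every factor of a Sturmian word sits inside some sufficiently long bispecial, since bispecials are the arbitrarily long palindromic prefixes of the characteristic word of the same slope) legitimately yields $\Fac(x)=\Fac(y)$, and Proposition~\ref{MoHe} finishes. The two points you flag as needing care --- injectivity of the letter-to-period-pair correspondence and the small cases --- are exactly the right ones, and both are settled by the facts the paper recalls before Theorem~\ref{theor2} (the example $010\mapsto 010010$ versus $010\mapsto 01010$ exhibits the two distinct period pairs). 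So this is a complete proof of a statement the paper only cites.
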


In contrast with Theorem \ref{theor2}, the fact that $y$ is a Sturmian word  in  Theorem \ref{theor:mff} follows immediately from the hypothesis that $y$ has the same factor complexity as $x$.
Let $x$ be an infinite binary word  such that
$\MF(x)=\{11,000\}$ and $y$ an infinite binary word such that $\MF(y)=\{11,101\}$. Then $x$ and $y$ have the same minimal forbidden factor complexity, but it is readily checked that $c_x(5)=3$ while $c_y(5)=4$. Note that $x$ contains $7$ factors of length $5$ corresponding to $3$ cyclic classes $(00100, 00101, 01001, 01010, 10010, 10100, 10101)$  while $y$ contains the factors
$00000, 10000, 10010, 10001$ no two of which are cyclically conjugate.

\section{The Limit Inferior of the Cyclic Complexity}\label{sec:tm}

We say that an aperiodic word $x$ has  \emph{minimal cyclic complexity} if $\liminf_{n\to \infty} c_x(n)=2.$
In the previous section we proved that Sturmian words have minimal cyclic complexity.  We now give other  examples of words having minimal cyclic complexity which include the well-known period-doubling word.  This may be compared with an analogous situation in the context of maximal pattern complexity in which a restricted class of Toeplitz words is found to have the same maximal pattern complexity as Sturmian words (see \cite{KaZa01}). We also show that for the paperfolding word we have $\liminf_{n\to \infty} c_x(n)=4.$
Clearly, if $\liminf_{n\to \infty} c_x(n)<+\infty,$ then the factor complexity of $x$ satisfies $\liminf_{n\to \infty} p_x(n)/n<+\infty$.
This is because each cyclic class of factors of length $n$ has at most $n$ elements.  But the converse is not true. In fact, we prove that for the Thue-Morse infinite word $t,$ for which $\liminf_{n\to \infty} p_t(n)/n=3$, see \cite[Proposition 4.5]{Br89},
we have  $\liminf_{n\to \infty} c_t(n)=+\infty.$

\subsection{Fixed Points of Uniform Substitutions with one Discrepancy}

\begin{proposition}\label{pro:unif}
 Let $A=\{0,1\}$ and $\mu:0\mapsto u0v,1\mapsto u1v$, for words $u,v\in A^*$ such that $|uv|>0$. Let $x$ be a fixed point of $\mu$. If $x$ is aperiodic, then for every $n\geq 0$ one has $c_x(k^n)=2$, where $k=|u|+|v|+1$.
\end{proposition}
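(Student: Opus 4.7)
The plan is to exploit the ``one-discrepancy'' shape of $\mu$ and iterate it. First I would show by induction on $n$ that $\mu^n$ inherits this shape: there exist words $U_n, V_n \in \Sigma^*$ with $|U_n V_n| = k^n - 1$ such that
\begin{equation*}
\mu^n(0) = U_n\, 0\, V_n, \qquad \mu^n(1) = U_n\, 1\, V_n.
\end{equation*}
The inductive step is immediate from $\mu^{n+1}(a) = \mu(U_n)\, u\, a\, v\, \mu(V_n)$, setting $U_{n+1} = \mu(U_n) u$ and $V_{n+1} = v\, \mu(V_n)$. Since $x$ is a fixed point of $\mu$, it is also a fixed point of $\mu^n$, so $x$ decomposes uniquely into consecutive blocks of length $k^n$, each equal to either $\mu^n(0)$ or $\mu^n(1)$.

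The case $n = 0$ is immediate: $x$ is aperiodic and binary, so both letters $0$ and $1$ occur in $x$, giving $c_x(1) = 2$. For $n \geq 1$ I would classify factors $w$ of $x$ of length $k^n$ by how they meet the block boundaries. Either $w$ coincides with a single block and is therefore $\mu^n(0)$ or $\mu^n(1)$, or else $w$ straddles exactly one boundary, in which case we may write $w = s_a p_b$ where $s_a$ is the length-$(k^n - j)$ suffix of $\mu^n(a) = U_n a V_n$ and $p_b$ is the length-$j$ prefix of $\mu^n(b) = U_n b V_n$, for some $1 \le j \le k^n - 1$ and $a, b \in \{0,1\}$. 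The cyclic conjugate $p_b s_a$ of $w$ is then a prefix of $U_n b V_n$ concatenated with a suffix of $U_n a V_n$. A short case analysis depending on whether $j \le |U_n|$, $j = |U_n|+1$, or $j \ge |U_n|+2$ reassembles $p_b s_a$ into $\mu^n(a) = U_n a V_n$ or $\mu^n(b) = U_n b V_n$; the key point is that the common $U_n$ and common $V_n$ on both sides of the cut let the two halves re-glue into a single $\mu^n$-image, regardless of where the cut falls relative to the unique discrepancy position.

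This establishes $c_x(k^n) \le 2$. For the matching lower bound, $\mu^n(0)$ and $\mu^n(1)$ have different Parikh vectors (they differ only in the discrepancy position), so they cannot be conjugate; and both occur as factors of $x$ since both letters are present in the aperiodic binary word $x$. Hence $c_x(k^n) = 2$. The only real obstacle is organizing the straddling case analysis cleanly; it is mechanical but relies crucially on the fact that $\mu$ has a \emph{common} prefix $u$ and a \emph{common} suffix $v$ surrounding the single distinguishing letter — this is precisely what forces every cyclic rotation across a block boundary to land inside the conjugacy class of some $\mu^n$-image.
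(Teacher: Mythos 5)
Your argument is correct, and its engine is the same as the paper's: a factor of length $k^n$ that straddles a block boundary can be cyclically rotated so that the shared prefix and shared suffix on either side of the cut re-glue into a single image of the substitution. The organization differs in two ways worth noting. First, the paper inducts on the complexity itself, proving $c_x(k^{n+1})\le c_x(k^n)$ by writing a non-aligned factor of length $k^{n+1}$ in the form $u'a_1vua_2v\cdots ua_{m}vu''$ with $u''u'=u$ (or the analogous form cut inside $v$) and rotating it to $ua_1v\cdots ua_mv$, the $\mu$-image of a length-$k^n$ factor; you instead first establish $\mu^n(a)=U_n a V_n$ with $|U_nV_n|=k^n-1$ and then do a single-level analysis, which has the small advantage of exhibiting the two conjugacy classes explicitly as those of $\mu^n(0)$ and $\mu^n(1)$. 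Your case split does check out: for $j\le |U_n|$ the rotation $p_bs_a$ equals $U_naV_n=\mu^n(a)$, and for $j\ge |U_n|+1$ it equals $U_nbV_n=\mu^n(b)$. Second, for the lower bound $c_x(k^n)\ge 2$ the paper appeals to aperiodicity (a word all of whose factors of some length are conjugate is periodic), whereas you observe that $\mu^n(0)$ and $\mu^n(1)$ have different Parikh vectors and so cannot be conjugate, both occurring because both letters occur in the aperiodic fixed point; both routes are valid, and yours is arguably the more elementary. No gaps.
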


\begin{proof}
We proceed by induction on $n$. Since $x$ is binary, the result is trivially verified for $n=0.$ Next let us  fix $n\geq 1,$ and we suppose by induction hypothesis that the result is true up to $n,$ and prove it for $n+1.$ We separate the factors of $x$ of length $k^{n+1}$ into two classes: those which are images under $\mu$ of a factor of length $k^n$, and those which are not. Clearly if two words of length $k^n$ are conjugate, then so are  their images under $\mu.$ Whence if we restrict to factors of length $k^{n+1}$ which are images under $\mu$ of factors of length $k^n,$ then there are  at most $c_x(k^n)$ many cyclic classes.  Next we show that each factor of length $k^{n+1}$ which is not the image under $\mu$ of a factor of length $k^n$ is conjugate to one that is.  So let $z$ be a factor of $x$ of length $k^{n+1}$ which is not the image under $\mu$ of a factor of $x$ of length $k^n.$ Then either $z=u'a_1vua_2v\cdots ua_{n}vu''$ for letters $a_i\in A$ and words $u',u''$ such that $u''u'=u$ or $z=v''ua_1vua_2v\cdots ua_{n}v'$ for letters $a_i\in A$ and words $v',v''$ such that $v'v''=v$. In either case $z$ is conjugate of $z'=ua_1vua_2v\cdots ua_{n}v$, which is an image under $\mu$ of a factor of $x$ of length $k^n$. Thus the number of cyclic classes of factors of length $k^{n+1}$ is at most $c_x(k^n)$ which by induction hypothesis is equal to $2.$ Since $x$ is aperiodic, we deduce that $c_x(k^{n+1})=2$ as required.
\end{proof}

\begin{example}
 If we take $u=0$ and $v=\epsilon$, we obtain the morphism $\mu:0\mapsto 00,1\mapsto 01$, whose fixed point is the so-called \emph{period-doubling word} $p=0100010101000100\cdots$. By Proposition \ref{pro:unif}, we have $c_p(2^n)=2$ for every $n\geq 0$.
\end{example}

More generally, we can consider words that are obtained as a limit of a sequence of substitutions each of the form $\mu_i$  defined by $\mu_i(a)= u_iav_i$ for  $a\in \{0,1\}$, where $u_i,v_i\in A^*$ are such that $|u_i|>0$. Indeed, one can define the infinite word $x=\lim_{n\to\infty} \mu_1\circ \mu_2\circ \cdots \circ \mu_n(0)$, since the words in the sequence have arbitrarily long common prefixes. By a similar argument as that used in the proof of  Proposition \ref{pro:unif}, we have that the following proposition holds.

\begin{proposition}\label{pro:mor}
Let $(\mu_i)_{i\geq 1}$ be an infinite sequence of substitutions such that for every $i$ there exist $u_i,v_i\in A^*$, $|u_i|>0$ and $\mu_i(a)=u_iav_i$ for each $a\in \{0,1\}$. Let $x=\lim_{n\to\infty} \mu_1\circ \mu_2\circ \cdots \circ \mu_n(0)$. 
If $x$ is aperiodic, then $\liminf_{n\to \infty} c_x(n)=2$.
\end{proposition}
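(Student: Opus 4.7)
My plan is to mimic Proposition~\ref{pro:unif} and establish $c_x(L_n) \leq 2$ for a suitable sequence $L_n \to \infty$. Concretely, set $k_i = |u_i| + |v_i| + 1$ and $L_n = k_1 k_2 \cdots k_n$ (with $L_0 = 1$); each $k_i \geq 2$ because $|u_i| \geq 1$, so $L_n \to \infty$. Since $x$ is aperiodic, $c_x(n) \geq 2$ for all $n \geq 1$ by Theorem~\ref{theor1}, so once $c_x(L_n) \leq 2$ is established for every $n$, the desired equality $\liminf_{n \to \infty} c_x(n) = 2$ follows immediately.

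The core observation is that, even in the non-uniform setting, $x$ still admits an $n$-th level block decomposition. Writing $\phi_n = \mu_1 \circ \mu_2 \circ \cdots \circ \mu_n$ and $\psi_{n,m} = \mu_{n+1} \circ \cdots \circ \mu_m$, the same prefix-stabilization argument that defines $x$ from the tower $(\mu_i)_{i \geq 1}$ shows that $\psi_{n,m}(0)$ converges to an infinite binary word $x^{(n)}$. Because the morphism $\phi_n$ is $L_n$-uniform and preserves prefix relations, applying it to the limit yields $x = \phi_n(x^{(n)})$. A straightforward induction then gives $\phi_n(a) = U_n\, a\, V_n$ for each $a \in \{0,1\}$, where $U_n, V_n$ are fixed words depending only on $\mu_1, \ldots, \mu_n$, with $|U_n| \geq 1$ because $|u_1| \geq 1$. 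Thus $x$ decomposes as a concatenation $\phi_n(x^{(n)}_0)\phi_n(x^{(n)}_1) \cdots$ of blocks of length $L_n$, each equal to either $\phi_n(0)$ or $\phi_n(1)$.

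Given this, I would carry out essentially the same case analysis as in the proof of Proposition~\ref{pro:unif}, with the pair $(U_n, V_n)$ playing the role of $(u,v)$. Any factor $z$ of $x$ of length $L_n$ either coincides with a full block (hence equals $\phi_n(0)$ or $\phi_n(1)$) or straddles two adjacent blocks $\phi_n(a)\phi_n(b)$ at some offset $j \in \{1, \ldots, L_n - 1\}$. A short inspection, according to whether $j$ falls in the $U_n$ part, at the central position, or in the $V_n$ part of the first block, identifies $z$ as a cyclic rotation of $\phi_n(a) = U_n a V_n$ (when $j \leq |U_n|$) or of $\phi_n(b) = U_n b V_n$ (when $j \geq |U_n| + 1$). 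Hence every factor of $x$ of length $L_n$ lies in at most two conjugacy classes, giving $c_x(L_n) \leq 2$ and completing the argument.

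The main obstacle is the bookkeeping of the non-uniform setup, specifically in verifying that the nested limit $x = \phi_n(x^{(n)})$ is legitimate and that $\phi_n$ still has the symmetric ``$U_n$-centre-$V_n$'' shape. Both rely crucially on the hypothesis $|u_i| > 0$, which forces the central letter of each iterate to migrate ever further to the right and keeps the prefixes of $\psi_{n,m}(0)$ (and hence of $\phi_m(0)$) stabilizing. Once this setup is in place, the conjugacy case analysis of the previous step is essentially identical to the one in Proposition~\ref{pro:unif} and no new complications arise.
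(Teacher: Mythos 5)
Your proof is correct and is essentially the paper's own argument: the paper establishes Proposition~\ref{pro:mor} only by remarking that it follows ``by a similar argument'' to Proposition~\ref{pro:unif}, and your decomposition $x=\phi_n(x^{(n)})$ with $\phi_n(a)=U_n a V_n$, $L_n=k_1\cdots k_n$, together with the conjugacy case analysis on the offset $j$ (giving a rotation of $\phi_n(a)$ when $j\le |U_n|$ and of $\phi_n(b)$ otherwise) is precisely that argument made explicit. One cosmetic point: the lower bound $c_x(n)\ge 2$ for aperiodic $x$ follows from the paper's observation that $c_x(n)=1$ for some $n$ forces pure periodicity, rather than from Theorem~\ref{theor1}, which only yields unboundedness.
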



\subsection{Paperfolding Words}

A paperfolding word is the sequence of ridges and valleys obtained
by unfolding a sheet of paper which has been folded in half infinitely many times. 
For example, the  regular paperfolding word \[x=00100110001101100010011100110110\cdots\] is obtained by folding a sheet of paper repeatedly in half in the same direction. 
Alternatively, an infinite word $x=x_0x_1x_2\cdots \in \{0,1\}^\nats$ is a paperfolding word if $(x_{4n})_{n\geq 0}=0^\omega$ (respectively $1^\omega)$, $(x_{4n+2})_{n\geq 0}=1^\omega$ (respectively $0^\omega)$ and $(x_{2n+1})_{n\geq 0}$ is a paperfolding word (see for instance \cite{Al92}).

We say that a factor $u$ of a paperfolding word $x$ is even (respectively odd) if $u=x_{n}x_{n+1}\cdots x_{n+|u|-1}$ with $n$ even (respectively $n$ odd). We recall the following fact:

\begin{lemma}[Lemma 2 in \cite{Al92}]\label{lempaper}  Let $x$ be a paperfolding word. If $u$ is a factor of $x$ of length $|u|\geq 7,$ then $u$ is either even or odd but not both.\end{lemma}

\begin{proposition}\label{prop:paper}
 Let $x=x_0x_1x_2\cdots $ be a paperfolding word. Then for each $n\geq 0$ and each factor $u$ of $x$ of length $4\cdot 2^{n+1},$  the cyclic class of $u$  consists of $|u|$-many distinct factors of $x.$  In particular, since $p_x(m)=4m$ for $m\geq 7$ (see \cite{Al92}),  we have $c_x(4\cdot 2^{n+1})=4$ for each $n\geq 0.$
\end{proposition}

\begin{proof} We show by induction on $n$ that for each paperfolding word $x$ and for each factor $u$ of $x$ of length $4\cdot 2^{n+1},$  the cyclic class of $u$  consists of $|u|$-many distinct factors of $x.$ The case $n=0$ is verified by direct inspection. 
For the inductive step, let $x=x_0x_1x_2\cdots $ be a paperfolding word, and let $u$ be a factor of $x$ with $|u|=4\cdot 2^{n+1}.$
We show that $x$ contains $|u|$-many distinct factors each of which is conjugate to $u.$ 
Without loss of generality we may assume $x_0=0.$  Also without loss of generality, we may suppose that $u$ is an even factor of $x.$ In fact, if $u$ is an odd factor of $x$ ending in some letter $a\in \{0,1\},$ then $u'=aua^{-1}$ is an even factor of $x$ conjugate to $u.$
So suppose $u=x_{2m}x_{2m+1}\cdots x_{2m+|u|-1}$ for some $m\geq 0.$ Let $x'=x_1x_3x_5\cdots $ and  $v=x_{2m+1}x_{2m+3}\cdots x_{2m+|u|-1}.$ Then $v$ is a factor of the paperfolding word $x'$ and $|v|=4\cdot 2^n.$ Thus by induction hypothesis,  the cyclic class of $v$ consists of $|v|$-many distinct factors of $x'.$  For each  conjugate $w=w_1w_2\cdots w_{|v|}$ of $v,$ if $w$ is an even factor of $x'$ then $0w_11w_2\cdots 0w_{|v|-1}1w_{|v|}$ is an even factor of $x$ conjugate to $u,$ while if  $w$ is an odd factor of $x'$ then $1w_10w_2\cdots 1w_{|v|-1}0w_{|v|}$ is an even factor of $x$ conjugate to $u.$ Thus we have  $|v|$-many distinct conjugates of $u$ each of which is an even factor of $x.$ On the other hand, if $z$ is an even factor of $x$ conjugate to  $u,$  then $a^{-1}za$ (where $a$ is the initial letter of $z)$ is an odd factor of $x$ conjugate to $u.$ Thus we also have $|v|$-many distinct conjugates of $u$ each of which is an odd factor of $x.$  Since $|u|\geq 7$ it follows from Lemma~\ref{lempaper} that the cyclic class of $u$  contains $2|v|=|u|$ distinct elements each of which is a factor of $x.$ \end{proof}

\subsection{Thue-Morse Word}

Let
\[
t=t_0t_1t_2\cdots =011010011001011010010110\cdots
\]
be the Thue-Morse word, i.e., the fixed point beginning with $0$ of the uniform substitution $\mu:0\mapsto 01,1\mapsto 10$. We prove that $\liminf_{n\to \infty} c_t(n)=+\infty$. It is known that $t$ is {\it overlap-free}, that is, does not contain as a factor any word of the form $avava$, where $a\in \{0,1\}$ and $v\in\{0,1\}^{*}$.

For every $n\geq 4$, the factors of length $n$ of $t$ belong to two disjoint sets: those which only occur at even positions in $t$, and those which only occur at odd positions in $t$. In fact, the factors of length $4$ of $t$ are partitioned according to  $\{0101, 0110, 1001, 1010\}$ which only occur at even positions, and $\{0010, 0011, 0100, 1011, 1100, 1101\}$ which only occur at odd positions. Except for $0101$ and $1010$ all other factors of length $4$ contain an occurrence of $00$ or $11$ and hence are decodable under $\mu.$ On the other hand, since $t$ is overlap-free, every occurrence of $0101$ in $t$ is the image under $\mu$ of an earlier occurrence of $00,$ and similarly every occurrence of $1010$ in $t$ is the image under $\mu$ of an earlier occurrence of $11.$

Let $p(n)$ be the factor complexity function of $t$. It is known \cite[Proposition 4.3]{Br89}, that for every $n\geq 2$ one has $p(2n)=p(n)+p(n+1)$ and $p(2n+1)=2p(n+1)$. Let  $f_{aa}(n)$ (resp.~$f_{ab}(n)$) denote the number of factors of $t$ of length $n$ which begin and end with the same letter (resp.~with different letters).

\begin{lemma}\label{lem:tm}
 For every $n\geq 2$, one has $f_{aa}(n)\geq  p(n)/3$ and $f_{ab}(n)\geq  p(n)/3$.
\end{lemma}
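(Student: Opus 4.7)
The plan is to prove both inequalities simultaneously by strong induction on $n$, using recurrences for $f_{aa}(n)$ and $f_{ab}(n)$ that come from how the Thue-Morse substitution $\mu$ acts on the factor set. The input is the even/odd-position partition of factors of length $N\ge 4$ recalled just before the lemma: a factor of length $2n$ at an even position is exactly $\mu(v)$ for some length-$n$ factor $v$ of $t$, and a factor at an odd position has the unique form $\bar v_1\,\mu(v_2\cdots v_n)\,v_{n+1}$ for some length-$(n+1)$ factor $v=v_1v_2\cdots v_{n+1}$. Since $\mu(a)$ begins with $a$ and ends with $\bar a$, these two maps are injective, and, combined with the $p(2n)=p(n)+p(n+1)$ identity, their images exhaust the $p(2n)$ factors of length $2n$. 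A parallel decomposition works at odd length $2n+1$: both the even-positioned factor $\mu(v_1\cdots v_n)v_{n+1}$ and the odd-positioned factor $\bar v_1\,\mu(v_2\cdots v_{n+1})$ are parametrised by the length-$(n+1)$ factors $v$.

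The key observation is how the $aa$/$ab$ type transforms under each map. The even-positioned factor $\mu(v)$ has boundary letters $(v_1,\bar v_n)$, flipping the type; the odd-positioned factor $\bar v_1\,\mu(v_2\cdots v_n)\,v_{n+1}$ has boundary letters $(\bar v_1,v_{n+1})$, which also flips the type (it is $aa$ iff $v_1\neq v_{n+1}$). In contrast, at odd length both maps preserve the type, since the boundary letters are either $(v_1,v_{n+1})$ or $(\bar v_1,\bar v_{n+1})$. This gives, for every $n\ge 2$,
\[
\begin{aligned}
f_{aa}(2n)   &= f_{ab}(n)+f_{ab}(n+1), &\qquad f_{ab}(2n)   &= f_{aa}(n)+f_{aa}(n+1),\\
f_{aa}(2n+1) &= 2f_{aa}(n+1),          &\qquad f_{ab}(2n+1) &= 2f_{ab}(n+1).
\end{aligned}
\]
As a sanity check, summing each pair recovers $p(2n)=p(n)+p(n+1)$ and $p(2n+1)=2p(n+1)$.

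With the recurrences in hand, I would verify the base cases $n=2,3$ by inspection ($p(2)=4$ with $f_{aa}(2)=f_{ab}(2)=2$; $p(3)=6$ with $f_{aa}(3)=2$, $f_{ab}(3)=4$). For the inductive step at $N=2n\ge 4$, the inductive hypothesis gives
\[
f_{aa}(2n)=f_{ab}(n)+f_{ab}(n+1)\;\ge\;\tfrac{p(n)}{3}+\tfrac{p(n+1)}{3}=\tfrac{p(2n)}{3},
\]
and the bound for $f_{ab}(2n)$ is symmetric. For $N=2n+1\ge 5$, $f_{aa}(2n+1)=2f_{aa}(n+1)\ge 2p(n+1)/3=p(2n+1)/3$, and likewise for $f_{ab}$. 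Equivalently, on even lengths the ratios $f_{aa}(N)/p(N)$ and $f_{ab}(N)/p(N)$ are weighted averages of the opposite ratios at smaller lengths (so they stay in $[1/3,2/3]$), and on odd lengths the ratios are preserved exactly.

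The main obstacle I anticipate is formulating the two length-$2n$ correspondences cleanly and verifying that their images are disjoint, but the latter is precisely the even/odd-position partition already recalled in the paper, so I would just cite it. Once the correspondences and the ratio-tracking are set up, the induction itself is a short arithmetic check.
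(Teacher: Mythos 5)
Your proposal is correct and follows essentially the same route as the paper: the same four recurrences $f_{aa}(2n)=f_{ab}(n)+f_{ab}(n+1)$, $f_{ab}(2n)=f_{aa}(n)+f_{aa}(n+1)$, $f_{aa}(2n+1)=2f_{aa}(n+1)$, $f_{ab}(2n+1)=2f_{ab}(n+1)$, derived from the same even/odd-position decomposition, with the same base cases $n=2,3$ and the same arithmetic in the inductive step. Your explicit tracking of the boundary letters of $\mu(v)$ and of the truncated images is just a slightly more detailed writing of the bijections the paper invokes.
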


\begin{proof}
By induction on $n$. The cases $n=2,3$ are readily verified. We now suppose $n\geq 2$ and prove the statement for $2n$ and $2n+1$. Let us first consider $f_{aa}(2n)$. The factors of length $2n$ of $t$ belong to two disjoint sets: those that begin at even positions in $t$, which are images of factors of $t$ of length $n$ under $\mu$, and those that begin at odd positions in $t$. The factors in the first group are in bijection with the factors of $t$ of length $n$ that begin and end with different letters, since the former are the images under $\mu$ of the latter. The factors in the second group are in bijection with the factors of $t$ of length $n+1$ that begin and end with different letters, since the former are obtained by deleting the first and the last letter from the images under $\mu$ of the latter. So, $f_{aa}(2n)=f_{ab}(n)+f_{ab}(n+1)$. By the inductive hypothesis we have $f_{aa}(2n)\geq p(n)/3+p(n+1)/3=p(2n)/3$.
Let us now consider $f_{ab}(2n)$. Arguing similarly as in the previous case, we have $f_{ab}(2n)=f_{aa}(n)+f_{aa}(n+1)$ and therefore by the inductive hypothesis we have $f_{ab}(2n)\geq p(n)/3+p(n+1)/3=p(2n)/3$.

Consider now $f_{aa}(2n+1)$. The factors of length $2n+1$ of $t$ belong to two disjoint sets: those that begin at even positions in $t$, which are images of factors of $t$ of length $n$ under $\mu$ followed by one letter, and those that begin at odd positions in $t$, which are images of factors of $t$ of length $n$ under $\mu$ preceded by one letter. The factors in the first group are in bijection with the factors of $t$ of length $n+1$ that begin and end with the same letter, since the former are obtained by deleting the last letter from  the images under $\mu$ of the latter. Also the factors in the second group  are in bijection with the factors of $t$ of length $n+1$ that begin and end with the same letter, since the former are obtained by deleting the first letter from  the images under $\mu$ of the latter. So, $f_{aa}(2n+1)=2f_{aa}(n+1)$. By the inductive hypothesis we have $f_{aa}(2n+1)\geq 2p(n+1)/3=p(2n+1)/3$.
Finally, consider $f_{ab}(2n+1)$. Arguing similarly as in the previous case, we get $f_{ab}(2n+1)=2f_{ab}(n+1)$. By the inductive hypothesis we have $f_{ab}(2n+1)\geq 2p(n+1)/3=p(2n+1)/3$.
\end{proof}

Since $p(n)\geq 3(n-1)$ for every $n$ \cite[Corollary 4.5]{DelVa89}, we obtain:

\begin{corollary}\label{cor:tm}
  For every $n\geq 2$, one has $f_{aa}(n)\geq  n-1$ and $f_{ab}(n)\geq  n-1$.
\end{corollary}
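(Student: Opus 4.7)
The plan is to observe that this corollary is essentially an immediate substitution, combining the two ingredients already assembled just before its statement. Lemma \ref{lem:tm} supplies the lower bounds
\[
f_{aa}(n) \geq \frac{p(n)}{3} \qquad \text{and} \qquad f_{ab}(n) \geq \frac{p(n)}{3}
\]
valid for every $n \geq 2$, and the cited bound $p(n) \geq 3(n-1)$ from \cite{DelVa89} provides a linear lower bound on the factor complexity of the Thue--Morse word. Plugging the second into the first yields $f_{aa}(n) \geq n-1$ and $f_{ab}(n) \geq n-1$.

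Concretely, I would simply write: by Lemma \ref{lem:tm} and the inequality $p(n) \geq 3(n-1)$, one has for every $n \geq 2$
\[
f_{aa}(n) \;\geq\; \frac{p(n)}{3} \;\geq\; \frac{3(n-1)}{3} \;=\; n-1,
\]
and identically for $f_{ab}(n)$. There is no genuine obstacle here: all of the combinatorial work has been done in Lemma \ref{lem:tm}, whose inductive argument decodes factors at even and odd positions through the substitution $\mu$, and in the previously established linear lower bound on $p(n)$. The only mild subtlety is to check that the hypothesis $n \geq 2$ is compatible with both inputs, which it is, since Lemma \ref{lem:tm} is stated for $n \geq 2$ and the bound $p(n) \geq 3(n-1)$ holds for all $n \geq 1$.

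Thus the proof is a single line; its purpose in the paper is presumably to package the conclusion of Lemma \ref{lem:tm} into a purely linear form that can be fed directly into the subsequent argument showing $\liminf_{n \to \infty} c_t(n) = +\infty$ for the Thue--Morse word, where one will want to count many pairwise non-conjugate factors (distinguished, for instance, by whether their first and last letters coincide) and hence needs at least a linear supply of each type.
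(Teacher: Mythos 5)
Your proof is correct and is exactly the paper's (implicit) argument: the corollary is stated immediately after the remark that $p(n)\geq 3(n-1)$, and follows by substituting this bound into the inequalities $f_{aa}(n)\geq p(n)/3$ and $f_{ab}(n)\geq p(n)/3$ of Lemma~\ref{lem:tm}. Nothing further is needed.
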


\begin{proposition}\label{prop:tm}
 Let $t$ be the Thue-Morse word. Then $\liminf_{n\to \infty} c_t(n)=+\infty$.
\end{proposition}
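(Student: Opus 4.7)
The plan is to argue by contradiction using Corollary \ref{cor:tm}. Assume that $\liminf_{n\to\infty} c_t(n) = M < +\infty$; then $c_t(n) \leq M$ for infinitely many $n$, and we seek a contradiction via the lower bound $f_{aa}(n) \geq n-1$.

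First I would express $f_{aa}(n)$ as a double count over the cyclic classes: $f_{aa}(n) = \sum_C \alpha(C)$, where the sum ranges over the cyclic classes $C$ of length-$n$ factors of $t$ and $\alpha(C)$ is the number of aa-type elements of $C$ that are factors of $t$. Since $t$ is overlap-free, and in particular cube-free, every factor of $t$ is either primitive or the square of a primitive word; hence every cyclic class has size exactly $n$ or $n/2$. For a primitive representative $v_C$ of $C$, $\alpha(C)$ is bounded by the number of cyclically adjacent equal-letter pairs of $v_C$, which counts positions where $v_C$ exhibits an occurrence of $00$ or $11$ cyclically.

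The crux of the argument is to obtain a sufficiently sharp upper bound on $\alpha(C)$ to combine with $f_{aa}(n) \geq n-1$ and conclude $c_t(n) \to \infty$. The cube-freeness alone only gives $\alpha(C) \leq n/2$, which yields just $c_t(n) \geq 2(n-1)/n$ and is insufficient. I would therefore exploit the substitutive structure: the map $v \mapsto \mu(v)$ induces an injection from cyclic classes of length-$(n/2)$ factors of $t$ into cyclic classes of length-$n$ factors, because any cyclic shift preserving the $\mu$-block structure $\{01,10\}$ must be by an even amount, so $\mu(v) \sim \mu(v')$ forces $v \sim v'$. A companion analysis of factors at odd positions via the shifted desubstitution $w = \bar{t}_i \mu(t_{i+1}\cdots t_{i+n/2-1}) t_{i+n/2}$, together with the recursions $f_{aa}(2n+1) = 2f_{aa}(n+1)$ and $f_{aa}(2n) = f_{ab}(n) + f_{ab}(n+1)$ from the proof of Lemma \ref{lem:tm}, should produce a recursive estimate in which new cyclic classes arise at each substitutive level to absorb the linearly growing counts $f_{aa}(n)$ and $f_{ab}(n)$.

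The main obstacle is making this last step quantitative: the injection $v \mapsto \mu(v)$ alone only yields $c_t(2n) \geq c_t(n)$, which is monotone but not divergent. One must show that each substitutive level contributes genuinely new classes, not merely lifts of previous ones. The idea is that the rigid overlap-free structure of $t$ constrains how many aa-type rotations of a single cyclic representative can actually appear as factors of $t$, so the $n-1$ many aa-factors guaranteed by Corollary \ref{cor:tm} cannot be absorbed into a bounded number of classes; combined with the recursive interplay between $f_{aa}$ and $f_{ab}$, this forces $c_t(n)$ to grow without bound, contradicting $c_t(n) \leq M$.
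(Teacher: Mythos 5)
Your proposal assembles the right ingredients --- Corollary~\ref{cor:tm}, the $\mu$-block structure, and the idea of bounding how many factors of $t$ a single conjugacy class can contain --- but it stops exactly at the decisive step, which you yourself flag as ``the main obstacle.'' The assertion that ``the rigid overlap-free structure of $t$ constrains how many aa-type rotations of a single cyclic representative can actually appear as factors of $t$'' is precisely the statement that needs a proof, and nothing in your outline supplies one: cube-freeness gives only $\alpha(C)\leq n/2$ (as you note, useless), and the injectivity of $\mu$ on conjugacy classes gives only the monotone bound $c_t(2n)\geq c_t(n)$, which cannot produce divergence. The recursions $f_{aa}(2n)=f_{ab}(n)+f_{ab}(n+1)$, etc., count factors, not conjugacy classes, so they do not by themselves show that ``new classes arise at each level.'' As written, the argument does not close.

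The missing idea is the parity rigidity of occurrences: for $n\geq 4$, every factor of $t$ of length $n$ occurs either only at even positions or only at odd positions of $t$. The paper uses this as follows. Take the $\geq n$ factors of length $n+1$ beginning and ending in different letters (Corollary~\ref{cor:tm}), apply $\mu$, and strip the first and last letters to get $\geq n$ factors $v$ of length $2n$ that begin and end with the same letter $a$. Any nontrivial conjugate of such a $v$ has the form $v'=xaay$ where $xa$ is a suffix and $ay$ a prefix of $v$; if $v'$ were also a factor of $t$, then $xa$ and $ay$ would each occur at both parities, which is impossible once $\max\{|x|,|y|\}\geq 4$. Hence each such $v$ is \emph{alone} in its conjugacy class, giving $c_t(2n)\geq n$ directly (no contradiction argument needed); a similar analysis at odd lengths shows each of $\geq n$ factors of length $2n+1$ has at most $6$ conjugate companions among factors of $t$, so $c_t(2n+1)\geq n/7$. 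This is exactly the quantitative bound on $\alpha(C)$ (indeed $\alpha(C)=1$ for the classes used at even lengths) that your sketch needs; without the parity argument, or some equally strong synchronization property of $\mu$, the proof does not go through.
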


\begin{proof}
We show that for each $n\geq 4,$ there exist at least $n$ factors of $t$ of length $2n$ each of which has no other factor of $t$ in its conjugacy class, and at least  $n$ factors of $t$ of length $2n+1$ each of which has at most $3$ other factors of $t$ in its conjugacy class. This of course implies  $\liminf_{n\to \infty} c_t(n)=+\infty$.

Fix $n\geq 4.$ By Corollary \ref{cor:tm}, there are at least $n$ factors of length $n+1$ which begin and end with different letters. Applying $\mu$, we obtain at least $n$ factors of length $2n+2$ which begin  with $ab$ and end with $ba$, where $\{a,b\}=\{0,1\}$.  By deleting the first and the last letter of each we obtain at least $n$ factors $v$ of length $2n$ which begin and end with the same letter and occur in $t$ at odd positions.
We claim that each such $v$ is unique in its conjugacy class. In fact, let $v'\neq v$ be  conjugate to $v.$ Then we can write $v'=ybbx$ and $v=bxyb$, for some words $x,y\in \{0,1\}^*$ such that $|x|+|y|\geq 6.$ If $v'$ is a factor of $t$, then  $bx$ occurs  both at an odd position (since it is a prefix of $v$) and at an even position (since $bb$ can only  occur in $t$ at an odd position). Hence $|bx|\leq 3$.
Moreover, also $yb$ occurs in $t$ both at an odd and at an even position, whence $|yb|\leq 3$, a contradiction.

Next we consider odd lengths.  By Corollary \ref{cor:tm}, there exist at least $n$ factors of length $n+1$ which begin and end with the same letter. As above, applying $\mu$ we obtain at least $n$ factors of length $2n+2$ which begin and end with $ab$, where $\{a,b\}=\{0,1\}.$  By deleting the first letter from each, we obtain at least $n$ factors of $t$ of length $2n+1$ which begin with $b$, end with $ab$, and occur in $t$ at odd positions. We claim that each such factor $v=bzb$, $|z|\geq 7,$ admits at most $3$ other factors of $t$ in its conjugacy class. Indeed, let $v'\neq v$ be conjugate to $v$. Then $v'$ can be written as $v'=z'bbx$, for a (possibly empty) prefix $x$ of $z$. If $v'$ is a factor of $t$, then, as above, $bx$ occurs  both at an odd and an even position. Hence $|bx|\leq 3$, and as $v'$ is entirely defined by $v$ and $|bx|$, we conclude that there are at most $3$ other factors of $t$ in the conjugacy class of $v$.
\end{proof}

\section{Acknowledgements}

We thank the anonymous referees for their careful reading of the manuscript and for providing us with many helpful comments and suggestions.  The second and third author acknowledge the support of the PRIN 2010/2011 project ``Automi e Linguaggi Formali: Aspetti Matematici e Applicativi'' of the Italian Ministry of Education (MIUR). 


\end{document}